\documentclass[letterpaper, 10 pt, conference]{ieeeconf}  
\IEEEoverridecommandlockouts 
\pdfoutput=1

\usepackage{amsmath,amssymb,color,cite}
\usepackage{graphicx}

\usepackage{latexsym}
\usepackage{algorithm, algorithmic}
\usepackage{verbatim}
\usepackage{cite}
\usepackage[english]{babel}
\usepackage{balance}
\usepackage[bookmarks=false]{hyperref}
\usepackage{booktabs}
\hypersetup{colorlinks=,linkcolor=,urlcolor=}

\let\labelindent\relax
\usepackage{amsmath,amsthm}
\usepackage{enumitem}

\usepackage{xcolor}

\usepackage{pgf}
\usepackage{tikz}
\usetikzlibrary{tikzmark}
\usepackage{arydshln}

\usepackage{amsfonts}
\usepackage{amssymb}
\usepackage{amsbsy}
\usepackage{bm}
\usepackage{mathtools}
\usepackage[font=scriptsize]{subcaption}
\usepackage[font=scriptsize]{caption}
\theoremstyle{definition}

\theoremstyle{plain}
\newtheorem{assumption}{Assumption}
\newtheorem{lemma}{Lemma}
\newtheorem{theorem}{Theorem}

\newtheorem{Remark}{Remark}[section]
\newcommand{\bRemark}{ \begin{Remark} \rm }
\newcommand{\eRemark}{ \end{Remark} }

\newcommand{\Scal}{\mathcal{S}}
\newcommand{\Mcal}{\mathcal{M}}
\newcommand{\Acal}{\mathcal{A}}
\newcommand{\Hcal}{\mathcal{H}}
\newcommand{\EE}{\mathbb{E}}
\newcommand{\PP}{\mathbb{P}}
\newcommand{\pistar}{{\pi^*}}
\newcommand{\Vstar}{V^*}

\newcommand{\hthresh}{{\bar{h}}}
\newcommand{\pithresh}{\pi_{t,\bar{h}}}

\newcommand{\Fcal}{\mathcal{F}}
\newcommand{\thetahat}{\hat\theta}
\newcommand{\hhat}{\hat h}

\newcommand{\hstar}{h^\star}

\newcommand{\hhigh}{h_{\text{high}}}
\newcommand{\hlow}{h_{\text{low}}}

\title{\LARGE \bf Safe Learning of Adaptive Control Policies for Remote Patient Monitoring
}

\author{Ramanan Tamizholi$^{1}$, Siddharth Chandak$^{2}$, Isha Thapa$^{3}$, Nicholas Bambos$^{2,3}$ and David Scheinker$^{4}$
\thanks{$^{1}$ Indian Institute of Science (IISc), Bangalore, India.}
\thanks{$^{2}$ Department of Electrical Engineering, Stanford University, USA.}
\thanks{$^{3}$ Department of Management Science \& Engineering, Stanford University, USA.}
\thanks{$^{4}$ School of Medicine, Stanford University, USA.}
\thanks{Emails: ramanant@iisc.ac.in, \{chandaks, ishadt, bambos, dscheink\}@stanford.edu}
}

\begin{document}
\maketitle

\begin{abstract} 
Remote Patient Monitoring (RPM) enables continuous observation of patients in their daily environments, improving both health outcomes and quality of life. A key challenge in RPM is determining the optimal monitoring intensity, while balancing patient safety and monitoring costs. This problem is further complicated when system parameters, such as transition probabilities and costs, are initially unknown. We develop a learning-based control framework that estimates these parameters and adapts the monitoring policy in real time. The proposed approach is an online model-based reinforcement learning algorithm tailored to RPM, with patient safety explicitly prioritized during exploration. We provide theoretical guarantees on safety and convergence to the optimal policy. Simulation results show that the algorithm converges to the optimal threshold-based policy, maintains low treatment costs, and reduces the risk of patients reaching critical health states.
\end{abstract}


\section{Introduction}
Remote Patient Monitoring (RPM) is becoming an increasingly integral part of modern healthcare, enabling continuous observation of patients within their daily environments and enhancing both their quality of life and healthcare outcomes \cite{Farias, Malasinghe, Zinzuwadia}. Advances in wearable medical devices, such as continuous glucose monitors \cite{Lee} and smartwatches with vital sign monitoring capabilities \cite{Masoumian} among others have facilitated the collection and transmission of extensive health data for real-time analysis and timely medical interventions.

A significant challenge in RPM is determining the optimal intensity of patient monitoring. In clinical settings, adjusting monitoring intensity may involve increasing the frequency of data review, initiating outreach to patients, or escalating to in-person visits, significantly impacting provider workload. From the patient's perspective, while intensive monitoring can lead to earlier detection of health issues and prompt responses, it may also cause treatment burden or fatigue due to its invasive nature \cite{heckman}. Conversely, less intensive monitoring is less intrusive to the patient, but might not provide sufficient patient support. This makes the decision of when and how intensively to monitor a key operational and clinical consideration, one that requires balancing health outcomes and monitoring costs by dynamically adapting the monitoring intensity to the patient’s health state.

Recent works  \cite{chandak1, chandak2} addressed this trade-off by modeling the RPM service as a Markov Decision Process (MDP), and showed that \textit{threshold policies}, where patients switch to intensive monitoring once their health falls below a certain threshold are optimal. These analyses characterized how the optimal policy depends on key parameters such as health improvement probabilities, monitoring options, and invasiveness costs, enabling the framework to be adapted to specific medical contexts and individualized patient needs. However, these works assumed that the transition probabilities and patient costs in the MDP are known a priori to clinicians. 

In practice, parameters may be unknown at the time of service deployment. Newer interventions may lack historical data, while established programs may encounter changing clinical guidelines for how health metrics are defined and prioritized \cite{thapa-rdt}. RPM systems naturally generate high-frequency patient data streams, creating the opportunity to adapt monitoring strategies in response to observed patient outcomes during deployment. However, doing so safely requires balancing the need to explore untested actions with the goal of consistently delivering high-quality care.

To address this challenge, we formulate the problem as a Reinforcement Learning (RL) task \cite{Sutton}. Most works applying RL in healthcare focus on optimizing treatment strategies, such as medication timing and dosing \cite{Watts2020, Jafar2024, Luckett2020}. In contrast, we address the monitoring control problem: learning when to intensively observe patients. Rather than automating treatment decisions, we develop an interpretable support tool that helps clinicians allocate attention effectively.

We propose an algorithm that simultaneously learns the system parameters and adapts the monitoring policy in real time. Since learning occurs while the RPM service is deployed on patients, exploration must be carried out without compromising patient safety or imposing excessive monitoring costs. To address this, we introduce a simple randomized exploration scheme tailored to the tiered RPM setting. By occasionally adjusting the deployed threshold toward clinically meaningful health levels, the algorithm promotes exploration while preserving patient safety. We show that, with appropriate initialization, the algorithm satisfies safety guarantees during learning and that the resulting policy converges to the optimal threshold policy under mild assumptions. Simulations further demonstrate that the learned policy converges to the optimal threshold control while maintaining low costs and reducing the risk of patients reaching critical health states.

Our framework offers a practical approach to adaptive monitoring in patient care, addressing a core yet understudied dimension of clinical decision-making: when and how intensively to observe patients. By learning system dynamics during deployment, our method enables real-time adaptation to evolving patient data, institutional constraints, and clinical guidelines without assuming prior knowledge. The resulting threshold policies are transparent and interpretable, providing clinicians with actionable guidance while preserving human oversight. These features make our approach a promising step toward more flexible and deployable RPM tools.

\section{The Remote Patient Monitoring Model} 
\label{sec:model}

Consider a patient who can be in a {\bf health state} $h_t\in\{0,1,2,3,...,H\}$ in each service time period $t\in\{0,1,2,3,...\}$. The RPM service places the patient in a monitoring/intervention state $m_t\in\Mcal=\{o,i\}$ in each time period $t$, abbreviated to {\bf monitoring state}, where $o$ denotes ordinary monitoring and $i$ intensive monitoring. Here, intensive monitoring includes the downstream clinical responses enabled by increased observation, and can therefore lead to improved patient health outcomes. Thus, one can view the 
monitoring-patient joint state $s_t\coloneqq(m_t, h_t)\in\Mcal\times\Hcal\eqqcolon\Scal$ as the system or service state at time $t$.

A higher patient health state $h\in\{0,1,2,3,...,H\}$ corresponds to the patient having better health. In particular, the lowest health state $0$ is {\bf critical} in the sense that, when the patient drifts into that state under any monitoring state $i$ or $o$, they go beyond the scope of the current service model and the service evolution stops. At that point other emergency and/or more severe medical interventions are required.

An advantage of our model is that the costs can be interpreted from multiple perspectives. But in this section, we first take the patient's treatment burden \cite{heckman} point of view. Under ordinary monitoring, the patient incurs a constant cost $C_o\ge0$ at any state $(o,h)$ with $h\in\{1,2,\ldots,H\}$. Correspondingly, under intensive monitoring, the patient incurs a constant cost $C_i\ge0$ at any state $(i,h)$. These costs reflect the invasiveness of the monitoring process to the patient's everyday lifestyle and quality of life. Of special interest are the critical health states, where this model ceases to apply. In such states, a significant cost of $C_c$ is incurred.

\subsection{Markov Decision Process}
We model the system as a Markov Decision Process (MDP). Such models are commonly used for medical decision making \cite{steimle, alagoz}. At the beginning of every time period $t$, the service decides whether to change the monitoring intensity. Formally, the decision/action space is $\Acal=\{o,i\}$ and each (state, action) pair is associated with a cost given by the function $c:\Scal\times\Acal\mapsto\mathbb{R}^+$. The transition probabilities are given by $p(s'|s,a)$ where $s',s\in\Scal$ and $a\in\Acal$. The cost functions and transition probabilities are defined as follows, and have been illustrated in Figure \ref{fig:rpm_model}.

\noindent\textbf{1. At health state $\mathbf{h=0}$:} 
\begin{quote}
    No action is taken and these states are absorbing states. A cost of $C_c$ is incurred.
\end{quote}

\noindent \textbf{2. At health states $\mathbf{1\leq h\leq H}$}:
\begin{enumerate}[label=\alph*)]
\item {\em Ordinary Monitoring ($m=o$), no Switching ($a=o$):} 
No monitoring change. The transition from $(o,h)$ is:
\begin{enumerate}[label=\roman*)]
\item 
$(o,h)\xrightarrow{a=o} (o,\min\{h+1,H\})$ with prob. $\lambda_o$
\item
$(o,h)\xrightarrow{a=o} (o,h-1)$ with prob. $\mu_o=1-\lambda_o$,
\end{enumerate}
and a cost $C_o$ is incurred. Note that $\min\{h+1,H\}$ above is used to account for $(o,H)\to (o,H)$ with prob. $\lambda_o$ since $H$ is the highest health state. 

\item {\em Intensive Monitoring ($m=i$), no Switching ($a=i$):}
No monitoring change. The transition from $(i,h)$ is:
\begin{enumerate}[label=\roman*)]
\item 
$(i,h)\xrightarrow{a=i}(i,\min\{h+1,H\})$ with prob. $\lambda_i$
\item
$(i,h)\xrightarrow{a=i}(i,h-1)$ with prob. $\mu_i=1-\lambda_i$,
\end{enumerate}
and a cost $c\big((i,h),i\big)=C_i$ is incurred.

\item {\em Intensive Monitoring ($m=i$) with Switching ($a=o$):}
Induces a switch to ordinary monitoring. The next state with respective transition probabilities, and the cost incurred is same as part (a): ordinary monitoring ($m=o$) with no switching $(a=o)$.

\item {\em Ordinary Monitoring (m=o), with Switching (a=i):}\\
Induces a switch to intensive monitoring. The next state with respective transition probabilities, and the cost incurred is same as part (b): intensive monitoring ($m=i$) with no switching $(a=i)$.
\end{enumerate}

The decision to switch (or not) is made in the beginning of the timestep. As a result, the transition probabilities and costs are decided solely by the control and the next monitoring state. We make the following \textit{natural} assumptions.
\begin{assumption}
\begin{enumerate}[label=\alph*)]
    \item The transition probabilities satisfy: $\lambda_i\geq\lambda_o$.
    \item The costs satisfy: $0\leq C_o\leq C_i\leq C_c$.
\end{enumerate}
\end{assumption}
The first assumption 1.a) states that the patient's health improves faster under intensive monitoring, rather than under ordinary. Regarding assumption 1.b), it is naturally expected  that  $C_o\le C_i$, as the treatment burden is higher under intensive monitoring. Further, given the severity of entering the critical state $h=0$, it is naturally expected that $C_i\le C_c$, and practically $C_c$ is expected to be {\em much larger} than $C_i$. 

\subsection{Optimal Monitoring Control}
\begin{figure*}[h!]
\centering
\begin{subfigure}[t]{0.35\textwidth}
  \centering
  \includegraphics[width=0.8\linewidth]{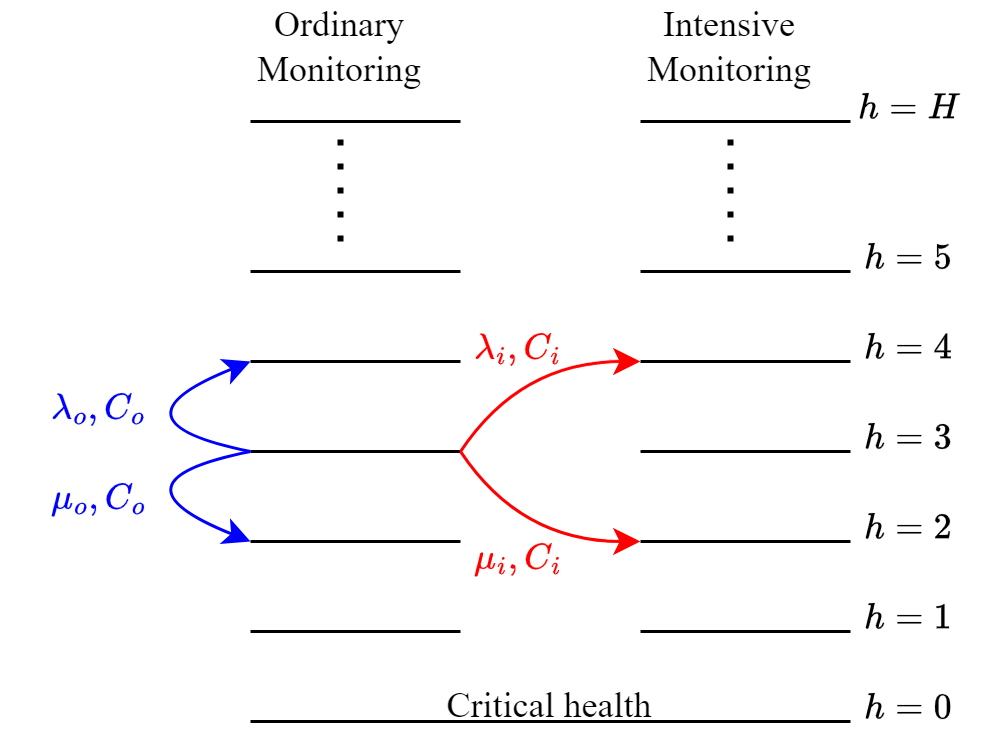}
  \caption{The Remote Patient Monitoring Model}
  \label{fig:rpm_model}
\end{subfigure}%
\hspace{5em}
\begin{subfigure}[t]{.35\textwidth}
  \centering
  \includegraphics[width=.75\linewidth]{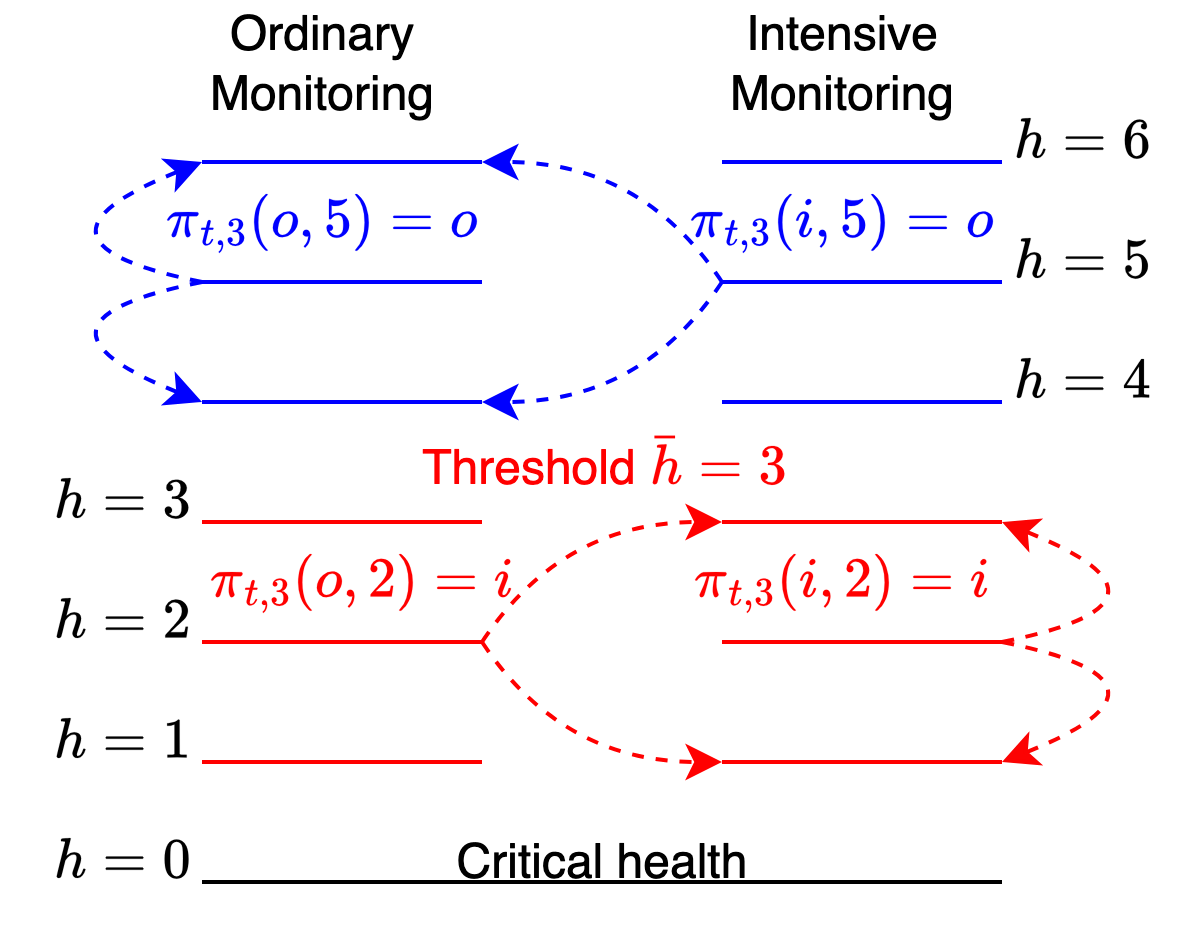}
  \caption{Threshold Policy $\pithresh$ with $\bar{h}=3$}
  \label{fig:threshold}
\end{subfigure}
\caption{(a) The RPM service model. The blue and red arrows represent the possible transitions from state $(o,3)$ for actions $o$ and $i$, respectively. The arrows are labeled with probability of transition and cost incurred. (b) An example where the optimal policy is the threshold policy, i.e., ordinary monitoring is used for $h>3$ and intensive for $h\leq3$.}
\label{fig:setup}
\end{figure*} 
We study this problem under the {\em discounted cost} setting of the dynamic programming (DP) methodology \cite{Bertsekas}.  Costs incurred $t$ time periods into the future are discounted by a factor of $\gamma^t$ with $0<\gamma<1$. Starting from state $s_0=s\in\Scal$, the total expected (discounted) cost to be incurred is
\begin{equation*}
\EE\Big[\sum_{t=0}^{T-1} \gamma^tc(s_t,a_t)+\gamma^TC_c \Big | \ s_o=s\Big]    
\end{equation*}
At each time $t$, control action $a_t$ is taken while in state $s_t=(m_t,h_t)$, at cost $c(s_t,a_t)$. The service ceases operation when the patient enters the critical state $h=0$ at time $T$, where $T$ is random and determined by evolution of the patient's health. At time $T$, the critical cost $C_c$ is incurred, discounted to $\gamma^TC_c$. Thus, discounting by $\gamma$ implicitly reflects the patient's desire to stay away from the critical health states. 

A stationary monitoring policy $\pi$ is a rule mapping each state $s\in\Scal$ to a control action $a\in\Acal=\{o,i\}$ to be taken at that state. The value function $V_\pi(s)$ of a policy $\pi$ is the total expected discounted cost the patient will incur until reaching the critical state, when it starts from state $s$ at time $t=0$:
$$
V_\pi(s)=\EE\left[\sum_{t=0}^{T-1}
\gamma^t c\Big(s_t,\pi(s_t)\Big)
+\gamma^TC_c\ \Big |\ s_0=s\right].
$$

The goal is to find an optimal control $\pi^*$ which minimizes $V_\pi$ over all controls $\pi$, i.e., $V_{\pi^*}(s) \le V_\pi(s)$ for every $s\in\Scal$ over all controls $\pi$. We define $V^*(s)\coloneqq V_{\pi^*}(s)$ which satisfies the following dynamic programming equation \cite{Bertsekas}.
\begin{equation*}
\Vstar(s)=\min_{a\in\{o,i\}} \Bigg\{c(s,a)+\gamma
\sum_{s'\in\Scal}\PP\Big(s'|s,a\Big)\Vstar(s')\Bigg\}, 
\end{equation*}
and can be solved numerically to yield the optimal policy $\pi^*(s)=\pi^*(m,h)$, that is, what optimal decision to take when the patient is in health state $h$ under monitoring $m$. For our model, the dynamic programming equation is as follows:
\begin{enumerate}
[align=left, 
leftmargin=0pt, 
labelindent=\parindent, listparindent=\parindent, 
labelwidth=0pt, 
itemindent=!,
label=(\roman*)]
\item 
At critical health $h=0, 
\Vstar(i,0)=\Vstar(o,0)=C_c,
$
\item 
For health states $1\leq h\leq H$, 
\begin{multline*}
\hskip -0.3 cm 
\Vstar(i,h)=\Vstar(o,h) \\
\hskip -0.3cm
=\min\Bigg\{ C_i + \gamma \Big[ \lambda_i\Vstar(i,h^+) + \mu_i \Vstar(i,h^-) \Big], \\
\hskip 1.35cm
C_o + \gamma \Big[ \lambda_o \Vstar(o,h^+) + \mu_o\Vstar(o,h^-)\Big] \Bigg\},
\end{multline*}
where $h^+=\min\{h+1,H\}$ and $h^-=h-1$.
\end{enumerate}

We next define a \textbf{threshold policy} $\pithresh$, characterized by the threshold $\hthresh$. Under these policies, there exists a threshold $\hthresh$ such that the patient stays in ordinary monitoring only when their health is better than $\hthresh$ (see Figure \ref{fig:threshold} for an example). So $\pithresh(i,h)=\pithresh(o,h)=i$ for $1\leq h\leq \hthresh$ and $\pithresh(i,h)=\pithresh(o,h)=o$ for $h>\hthresh$. The following result makes an important observation about the optimal control.
\begin{theorem}
    Under the assumption that $H$ and $1/\gamma$ are sufficiently large, the optimal policy $\pistar$ is a threshold-based policy $\pithresh$ for some $0\leq\hthresh\leq H$.
\end{theorem}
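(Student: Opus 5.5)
Since $\Vstar(i,h)=\Vstar(o,h)$ for all $h$, I will write $V(h)$ and $\Delta(h)\coloneqq V(h)-V(h-1)$ for $1\le h\le H$. The optimal action at $h$ is $i$ exactly when
\begin{equation*}
C_i-C_o \le \gamma(\lambda_i-\lambda_o)\big[V(h^-)-V(h^+)\big].
\end{equation*}
To see this, write $\mu_x V(h^-)+\lambda_x V(h^+)=V(h^-)+\lambda_x\big(V(h^+)-V(h^-)\big)$ for $x\in\{i,o\}$ and subtract the two bracketed expressions.

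The right-hand side equals $\gamma(\lambda_i-\lambda_o)\big(\Delta(h)+\Delta(h+1)\big)$ for $h<H$; at $h=H$ it is $\gamma(\lambda_i-\lambda_o)\Delta(H)$. It therefore suffices to show that $-\Delta(h)$, the marginal value of one extra health level, is nonincreasing in $h$, i.e.\ that $V$ is convex in $h$. Then the set of $h$ where $i$ is chosen is a lower interval $\{1,\dots,\hthresh\}$, which is exactly a threshold policy $\pithresh$. When $\lambda_i=\lambda_o$ the comparison is trivial: action $o$ is always optimal, and $\hthresh=0$.

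I would prove monotonicity and convexity by value iteration. Start from $V_0\equiv C_c$ on $\{0,\dots,H\}$ and apply the Bellman operator $T$. I would show that $T$ preserves the class of functions with $V(0)=C_c$ that are nonincreasing and convex on $\{0,\dots,H\}$. Because $T$ is a $\gamma$-contraction and the class is closed, the fixed point $\Vstar$ inherits both properties.

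Monotonicity is routine: each branch $C_x+\gamma[\lambda_x V(h^+)+\mu_x V(h^-)]$ is nonincreasing in $h$ when $V$ is, and a minimum of nonincreasing functions is nonincreasing. Under Assumption 1, nonnegativity $V(h)\le C_c$ also follows from $C_i\le C_c$.

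\textbf{The main obstacle is convexity.} A pointwise minimum of convex functions need not be convex, and there are two boundary effects.
\begin{itemize}
\item At $h=0$ the value is pinned at $C_c$ instead of following the recursion.
\item At $h=H$ the reflection $h^+=H$ distorts the second differences.
\end{itemize}
This is where the hypotheses that $H$ and $1/\gamma$ are large must enter. My plan is to avoid the general minimum-of-convex issue by a structural induction. Assume the iterate $V_k$ is convex and nonincreasing. Then the advantage $D_k(h)=(\lambda_i-\lambda_o)\big(V_k(h^-)-V_k(h^+)\big)$ is nonincreasing, so $TV_k$ uses $i$ on a lower interval and $o$ above it. At the switching point, the two branches cross with the $i$-branch having the steeper descent, and I would check that the minimum is still convex there. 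Concretely, I would compare second differences of each branch on either side of the switch and use $\lambda_i\ge\lambda_o$ to show the slope change has the right sign.

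For the boundaries, I would handle $h=1$ using $C_c\ge C_i$ together with $\gamma$ close to $1$. These give $\Delta(1)\le\Delta(2)$, since the drop from $C_c$ to $V(1)$ is large relative to later increments. For large $H$, the top of the chain behaves like an infinite chain, so $\Delta(H)\approx0$ and the reflection does not create a convexity violation.

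If this direct induction proves too delicate, my fallback is explicit computation. For a fixed threshold $\hthresh$, solve the linear recursions for $V_{\pithresh}$ on the two regions, which are piecewise geometric with roots of $\gamma\lambda_x z^2-z+\gamma\mu_x=0$. Choose $\hthresh$ as the largest $h$ satisfying the switching inequality, and verify the Bellman optimality conditions directly using the monotone decay of these geometric increments. Large $H$ and $\gamma$ near $1$ would again be used to control the boundary terms.
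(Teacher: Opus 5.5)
\textbf{Comparison with the paper.} The paper does not argue through convexity. It imports two results of \cite{chandak1}. First, as $H\to\infty$, always-ordinary monitoring ($\hthresh=0$) is optimal when $\gamma(\lambda_i-\lambda_o)\le C_i/C_c$. Second, in the complementary case a threshold policy is optimal provided $\frac{\gamma\mu_o(1+\gamma\mu_o)}{1-\gamma^2\lambda_o\mu_o}\le 1$, which simplifies to $\gamma\mu_o(1+\gamma)\le 1$. Your reduction is correct: the $i$-versus-$o$ comparison, and the implication that a monotone convex $\Vstar$ gives a threshold policy.

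\textbf{The hypothesis on $\gamma$ is misread.} The paper's condition is where that hypothesis enters, and ``$1/\gamma$ sufficiently large'' means $\gamma$ \emph{small}, not $\gamma$ close to $1$. You rely on $\gamma\to 1$ at the $h=1$ boundary and in your fallback. That is the opposite regime, and it is the one where the known sufficient condition fails as soon as $\mu_o>1/2$. The misreading also breaks your monotonicity base step. The bound $TV_0(1)=C_o+\gamma C_c\le C_c$ needs $C_o\le(1-\gamma)C_c$, not $C_i\le C_c$. If $C_o/(1-\gamma)>C_c$, every policy has $V(h)>C_c=V(0)$ for $h\ge 1$. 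Then nonincreasing functions are not an invariant class. The conclusion $\hthresh=0$ still holds there, but it needs a separate argument.

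\textbf{The convexity-preservation step has the sign reversed.} Write $f_x(h)=C_x+\gamma[\lambda_xV_k(h^+)+\mu_xV_k(h^-)]$. Under your induction hypothesis, $f_i-f_o=(C_i-C_o)-\gamma(\lambda_i-\lambda_o)[V_k(h^-)-V_k(h^+)]$ is nondecreasing in $h$. So the $o$-branch, not the $i$-branch, descends at least as steeply. The minimum $\min\{f_i,f_o\}$ therefore switches from the shallower branch to the steeper one, which is a concave kink. Concretely, let $\delta^2 g(h)=g(h+1)-2g(h)+g(h-1)$, and suppose $TV_k$ uses $i$ on $\{1,\dots,\hthresh\}$ with $\hthresh\ge 2$ and $o$ above. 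Then
\[
\delta^2 (TV_k)(\hthresh)=\delta^2 f_i(\hthresh)-\big[f_i(\hthresh+1)-f_o(\hthresh+1)\big],
\]
and the bracket is nonnegative because $o$ is chosen at $\hthresh+1$. A symmetric identity holds at $\hthresh+1$.

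So no appeal to $\lambda_i\ge\lambda_o$ can give ``the right sign''. The minimum always lowers the second difference at the switch. You would need a quantitative estimate showing that the convexity of the branches dominates the gap $f_i(\hthresh+1)-f_o(\hthresh+1)$. That estimate is exactly where a parameter condition like the one above has to enter, and your proposal does not supply it. Your boundary claims (``$\Delta(H)\approx 0$'', ``the drop from $C_c$ is large'') are heuristics, not the required sign inequalities. As written, the argument does not establish the theorem.
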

\begin{proof}
    See Appendix I.
\end{proof}
This result implies that we can reduce our problem to identifying optimal threshold $\hthresh$ by restricting our attention to policies of the form $\pithresh$. This result has been further corroborated by numerical experiments in \cite{chandak1, chandak2}. 

\section{Learning the Unknown RPM Model}
In the scenario where the parameters of the RPM service model are known to the clinicians, they can just obtain the optimal policy by solving the dynamic programming equations above. But in reality, parameters such as the costs ($C_o$ and $C_i$) and the probabilities ($\lambda_o$ and $\lambda_i$) are not accurately known to the clinicians. A rough estimate of the parameters might be available based on similar implementations of RPM service, but it is necessary to learn the true parameters to achieve the optimal benefits from the RPM service. 

The manner in which the samples are obtained for learning parameters depends on the exact service model and the definition of the parameters. If the cost parameters reflect the treatment burden on the patient, then the samples for estimating the invasiveness cost can be obtained by surveying patients about their experience with the monitoring service. In contrast, estimating the probability parameters is more straightforward, as it only requires observing changes in patient health state under different monitoring schemes. We assume that the critical cost is known to the service but it can be estimated in the same way as other costs.

It is important to note that these parameters are learned by observing the behavior of actual patients under the RPM service. We model this as an online reinforcement learning (RL) problem \cite{Sutton}, where the control policy is continually updated based on the samples observed thus far. The monitoring control must be adjusted to enable efficient learning of the parameters while preventing patients from incurring excessive costs. We propose a novel exploration scheme in Algorithm \ref{algo-main}, tailored to the RPM setting, that explicitly prioritizes patient safety during exploration.

The implementation of the algorithm is organized into epochs, each corresponding to a fixed time period (e.g., 10 days or one month). At the start of each epoch, the policy is computed based on the current parameter estimates and then applied to all patients throughout that epoch. During an epoch, both the patient population and its size may change, as new patients enroll in the remote monitoring program and existing patients leave upon reaching the critical health state.

There are three parts to the algorithm. In the beginning of each epoch, the system parameters are updated based on the samples collected till now. Next, a policy is computed for the RPM service based on the parameter estimates. Finally, this control is used in remote monitoring of patients, and samples are obtained for cost and transition probabilities. 

\begin{algorithm}[H]
\caption{Online Model‑Based Learning for RPM}
\begin{algorithmic}[1]\label{algo-main}
  \STATE \textbf{Input:} Priors $\lambda_{o,p}, \lambda_{i,p}, C_{o, p}, C_{i, p}$, prior strength $n_0$, healths $\hhigh$ and $\hlow$, probability sequence $\rho_k$ and critical cost $C_c$
  \STATE \textbf{Initialization:} Counts $N_a\gets n_0$, $N^+_a\gets \lambda_{a,p}n_0$, cost sums $S_a\gets C_{a,p}n_0$ for $a\in\{i,o\}$
  \STATE \textbf{for} epoch $k=1,\dots,$ \textbf{do}
    \STATE \quad \textbf{(1) Parameter Estimation:}
    \STATE \quad for $a\in\{i,o\}$, compute estimates $$\widehat\lambda_a\gets N^+_a/N_a\text{\;\;and\;\;}\widehat{C}_a\gets S_a/N_a$$
          \vspace{-16pt}
      \STATE \quad \textbf{(2) Safe and Exploratory Monitoring Control}
      \STATE \quad Compute optimal threshold based on current parameter estimates, denote by $\hat h_k$
      \STATE \quad Deployed threshold $\tilde h_k$:
      \begin{itemize}
          \item if $\hlow\leq \hat h_k\leq \hhigh$, $\tilde h_k=\hat h_k$,
          \item if $\hat h_k<\hlow$, $\tilde h_k = 
\begin{cases}
\hat h_k, & \text{w.p.}\;\; 1-\rho_k, \\
\hlow, & \text{w.p.}\;\; \rho_k.
\end{cases}$
          \item if $\hat h_k>\hhigh$, $\tilde h_k = 
\begin{cases}
\hat h_k, & \text{w.p.}\;\; 1-\rho_k, \\
\hhigh, & \text{w.p.}\;\; \rho_k.
\end{cases}$
      \end{itemize}
    \STATE \quad \textbf{(3) RPM Service and Data Collection:}
    \STATE \quad \textbf{for} each patient in system \textbf{do}
    \STATE \quad \quad \textbf{for} timesteps $t=0,\ldots,T-1$ within epoch $k$ \textbf{do}
      \STATE \quad \quad \quad Observe health state $h_t$
      \STATE \quad \quad \quad Implement monitoring control $a=\pi_{t,\tilde h_k}(h_t)$
      \STATE \quad \quad \quad Observe next state $h_{t+1}$ and cost sample $c_t$
      \STATE \quad \quad \quad Update: 
      \vspace{-5pt}
      \begin{align*}
          N^+_a&\gets N^+_a+1\text{\;if\;}h_{t+1}=h_t+1 \text{ or } H\\
          N_a&\gets N_a+1, S_a\gets S_a+c_t.
      \end{align*}
      \vspace{-20pt}
      \STATE \quad \quad \textbf{end for}
      \STATE \quad \textbf{end for}
  \STATE \textbf{end for}
  \STATE \textbf{Output:} Final threshold $\hat h$ and estimates $\hat\lambda_a,\hat C_a$.
\end{algorithmic}
\end{algorithm}

Before discussing the intuition behind Algorithm \ref{algo-main}, we discuss some notation used therein. For $a\in\{i,o\}$, $N_a$ denotes the number of times monitoring $a$ has been applied, $N_a^+$ denotes the number of times the patient's health improved under monitoring $a$, and $S_a$ denotes the total cost patients had to incur under monitoring $a$. These are used to estimate the transition probabilities $\widehat\lambda_a=N_a^+/N_a$ and costs $\widehat C_a=S_a/N_a$. These estimates are initialized using the priors $\lambda_{a,p}$ and $C_{a,p}$ and the strength in these beliefs $n_0$. 

In each epoch, we compute the optimal threshold corresponding to the current parameter estimates and denote it by $\hat h_k$. The threshold actually deployed in epoch $k$ is denoted by $\tilde h_k$. This deployed threshold may differ from $\hat h_k$ according to the following simple scheme, which is designed to ensure sufficient exploration. If the computed optimal threshold $\hat h_k$ is too low, the system may collect too few samples from patients receiving intensive monitoring. Therefore, when $\hat h_k<\hlow$, we set $\tilde h_k$ to $\hlow$ with probability $\rho_k$. Similarly, when $\hat h_k>\hhigh$, we set $\tilde h_k$ to $\hhigh$ with probability $\rho_k$. The probability sequence $\rho_k$ is chosen to be slowly decreasing, so that the algorithm explores sufficiently in early epochs and gradually relies more on the estimated optimal threshold as the parameter estimates become more accurate.

This exploration scheme has three main advantages. First, unlike generic exploration methods in reinforcement learning, which typically perturb arbitrary state-action decisions, our scheme exploits the threshold structure of the RPM problem. It acts directly on clinically meaningful quantities, making the algorithm interpretable and easy to implement within clinical workflows. In particular, the parameters $\hlow$ and $\hhigh$ correspond to fragile and sufficiently good health states, while $\rho_k$ can be chosen as any slowly decreasing sequence. Second, because exploration is performed by adjusting the  threshold rather than arbitrary actions, the scheme naturally supports patient-safety guarantees. Third, under mild assumptions, the resulting policy converges almost surely to the optimal policy. The following theorem establishes convergence under standard technical assumptions. The assumptions and proof are provided in Appendix~II.

\begin{theorem}\label{thm:convergence}
The policy (equivalently, the threshold) deployed by Algorithm~\ref{algo-main} converges almost surely to the optimal policy (respectively, optimal threshold).
\end{theorem}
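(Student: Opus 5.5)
The plan is to show that the plug-in estimates $(\widehat\lambda_o,\widehat\lambda_i,\widehat C_o,\widehat C_i)$ converge almost surely to the true parameters. Then a continuity and uniqueness argument for the optimal threshold gives $\hat h_k=\hstar$ eventually, and $\rho_k\to 0$ handles the gap between the computed threshold $\hat h_k$ and the deployed one $\tilde h_k$. We will rely on the standard assumptions the paper defers to Appendix~II, chosen as follows.
\begin{enumerate}[label=(\roman*)]
\item The optimal threshold $\hstar$ is unique. Equivalently, at the true parameters the two arguments of the $\min$ in the dynamic programming equation differ by a strict margin at every $h$, so $\pistar$ is unique and, by Theorem~1, a threshold policy.
\item $1\le\hlow\le\hhigh<H$.
\item Every epoch has at least one patient contributing at least one transition from a state with $h\le\hlow$ and one from a state with $h>\hhigh$.
\item $\sum_k\rho_k=\infty$ and $\rho_k\to0$.
\item The cost samples are i.i.d.\ given the action, with finite mean $C_a$.
\end{enumerate}

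\textbf{Step 1 (infinite sampling of both actions).} For action $i$, suppose $\hat h_k\ge\hlow$. Then $\tilde h_k\ge\hlow$, since either $\tilde h_k=\hat h_k$ or $\tilde h_k=\hhigh\ge\hlow$. If instead $\hat h_k<\hlow$, then $\tilde h_k=\hlow$ with probability $\rho_k$, drawn independently of the past. The events $E_k=\{\tilde h_k\ge\hlow\}$ therefore satisfy $\PP(E_k\mid\Fcal_{k-1})\ge\rho_k$. By L\'evy's conditional Borel--Cantelli lemma and $\sum\rho_k=\infty$, $E_k$ occurs infinitely often almost surely. In each such epoch, assumption (iii) yields a transition from some $h\le\hlow\le\tilde h_k$, where action $i$ is applied. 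Hence $N_i\to\infty$ almost surely. The symmetric argument with $\hhigh$ and states $h>\hhigh$ gives $N_o\to\infty$.

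\textbf{Step 2 (consistency of estimates).} Each application of action $a$ produces an improvement indicator that is Bernoulli$(\lambda_a)$ and a cost sample with mean $C_a$. These are independent of the past given the action, so the increments of $N_a^+-\lambda_aN_a$ and $S_a-C_aN_a$ are martingale differences. The concern is that the sample times are chosen adaptively. The standard remedy is the skeleton/stack-of-rewards construction: pre-draw an i.i.d.\ sequence of outcomes for each action and let the $n$-th use of $a$ consume the $n$-th draw. Then $\widehat\lambda_a$ is the empirical mean of the first $N_a-n_0$ draws, mixed with a prior whose weight vanishes. Combining the strong law of large numbers with $N_a\to\infty$ gives $\widehat\lambda_a\to\lambda_a$ and $\widehat C_a\to C_a$ almost surely. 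The reflection at $H$ is counted as an improvement in the update rule, so the Bernoulli interpretation still holds there.

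\textbf{Step 3 (threshold convergence).} For each of the finitely many thresholds $\hthresh\in\{0,\dots,H\}$, the value $V_{\pithresh}$ solves a linear system $(I-\gamma P_{\hthresh})V=c_{\hthresh}$, where $P_{\hthresh}$ is substochastic. Its entries are polynomial in $(\lambda_o,\lambda_i)$ and affine in the costs, so it depends continuously on the parameters. The same holds for the Bellman fixed point $\Vstar$, because the Bellman operator is a $\gamma$-contraction that is continuous in the parameters. The strict margin in (i) persists in a neighbourhood of the true parameters, so the computed threshold satisfies $\hat h_k=\hstar$ for all large $k$ almost surely.

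\textbf{Step 4 (deployed threshold).} Once $\hat h_k=\hstar$, we have $\tilde h_k\neq\hstar$ only if $\hstar\notin[\hlow,\hhigh]$ and the exploration coin fires, which happens with probability $\rho_k\to0$. This yields convergence in probability. For almost sure convergence we also need $\sum\rho_k<\infty$ on those epochs, which conflicts with the requirement $\sum\rho_k=\infty$ in Step~1. This is the main obstacle. The first route I would try is to assume $\hlow\le\hstar\le\hhigh$, i.e.\ the clinically chosen band contains the optimum; then no exploration happens eventually and $\tilde h_k=\hstar$ almost surely. The alternative is to drop exploration adaptively, setting $\rho_k=0$ once $\min_aN_a$ exceeds a growing schedule, which yields a summable tail. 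Under either assumption the policy $\pi_{t,\tilde h_k}$ equals $\pistar$ for all large $k$ almost surely.
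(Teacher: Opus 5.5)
Your proposal is correct and follows the paper's Appendix~II proof essentially step for step: conditional Borel--Cantelli with $\sum_k\rho_k=\infty$ gives $N_a\to\infty$, a strong law gives $\thetahat_k\to\theta$, local uniqueness gives $\hhat_k=\hstar$ eventually, and the hypothesis $\hlow\le\hstar\le\hhigh$ (which the paper also imposes) gives $\tilde h_k=\hhat_k$ eventually; the paper differs only in deriving $\PP(X_{a,k}\ge 1\mid\Fcal_{k-1})\ge\rho_k p_{\min}^H$ from $p_{\min}>0$ and epoch length $L>H$ instead of postulating your sample-path condition (iii), in using a martingale strong law instead of the stack-of-rewards construction, and in assuming local uniqueness of $\hstar$ outright instead of deriving it from a strict Bellman margin. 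One caution: the adaptive-$\rho_k$ alternative in your Step~4 changes the algorithm and does not in general remove the need for the band assumption, because when $\hstar\in\{0,H\}$ one action is sampled only in exploration epochs, so a growing schedule forces exploration, and hence $\tilde h_k\neq\hstar$, infinitely often.
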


\begin{figure*}[h!]
\centering
\begin{subfigure}[t]{0.4\textwidth}
  \centering
  \includegraphics[width=0.95\linewidth]{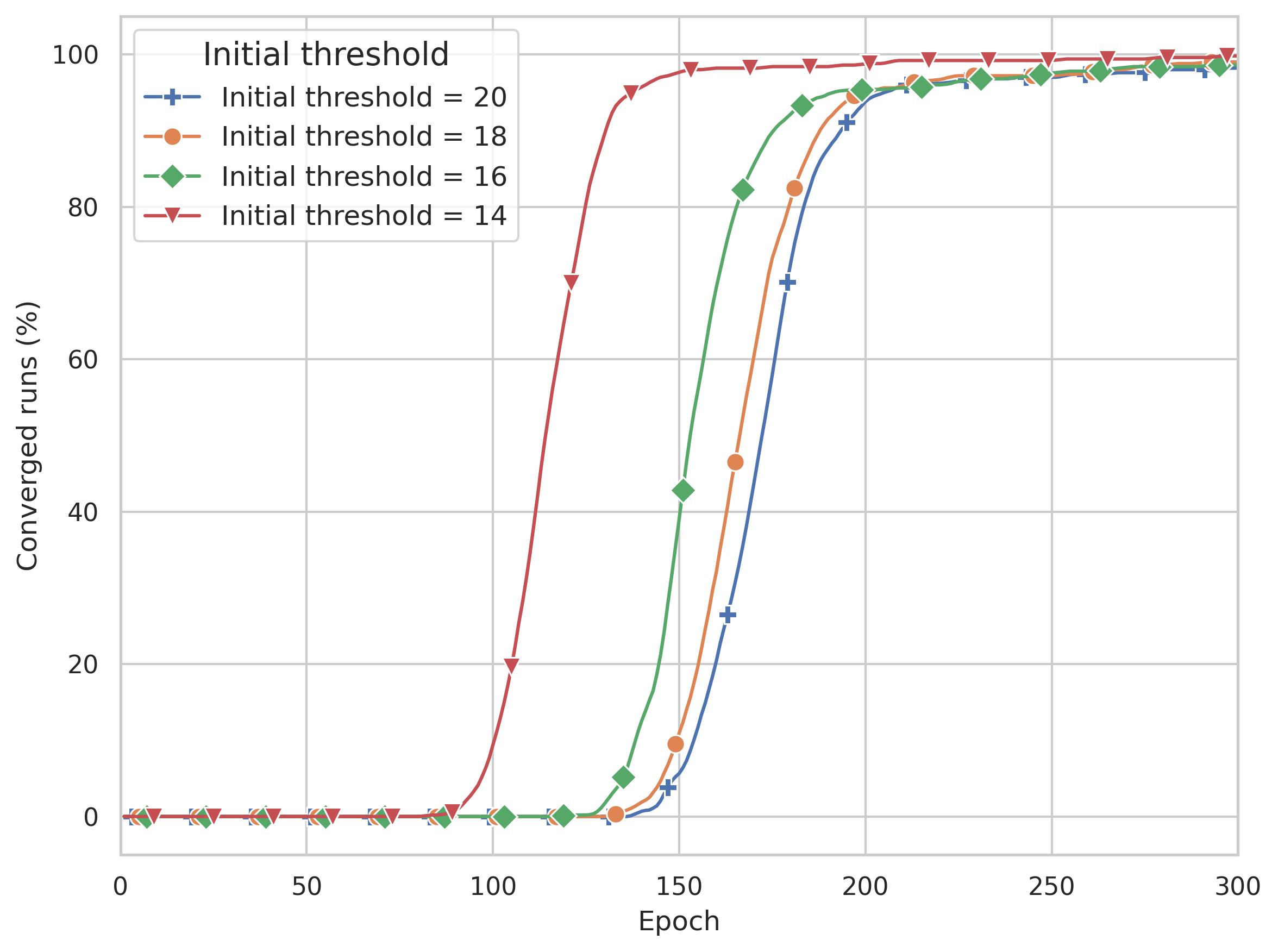}
  \caption{Effect of initial threshold}
  \label{fig:convergence_prob_vs_initial_threshold}
\end{subfigure}%
\begin{subfigure}[t]{0.4\textwidth}
  \centering
  \includegraphics[width=0.95\linewidth]{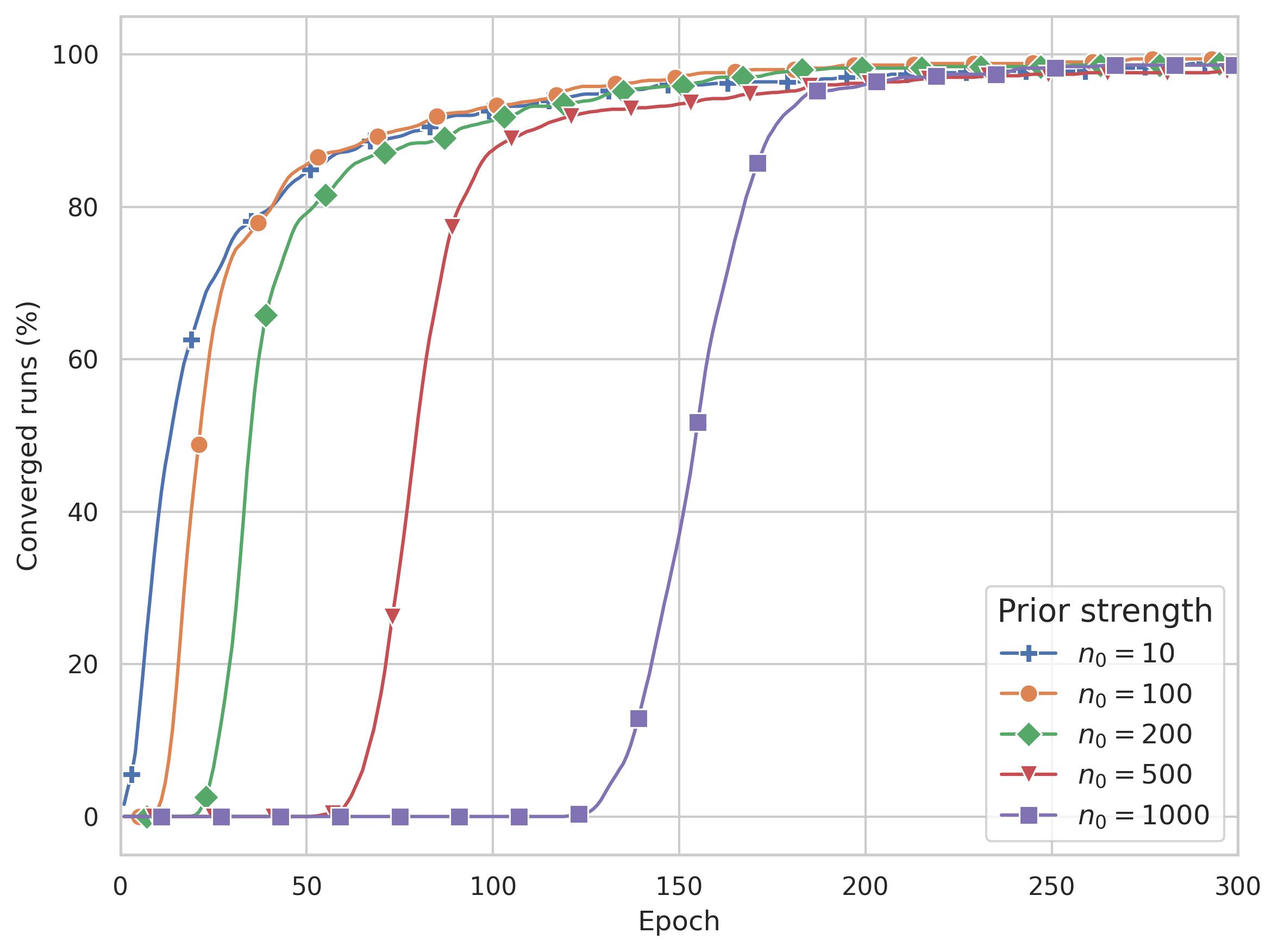}
  \caption{Effect of prior strength}
  \label{fig:convergence_prob_vs_prior_strength}
\end{subfigure}%
\caption{Percentage of converged runs after each epoch for Algorithm \ref{algo-main}}
\label{fig:convergence}
\end{figure*}

\subsection{Safety Considerations}\label{subsec:safe}
As the parameters are learned while the RPM service is deployed on real patients, it is essential that the policy avoid decisions that increase the risk of patients reaching critical health states. One extreme way to ensure safety is to place all patients under intensive monitoring at all health states. However, this slows parameter learning and substantially increases monitoring costs. With appropriate configuration, our algorithm balances safety and efficiency. In particular, the algorithm should be initialized by carefully choosing the priors $\lambda_{a,p}, C_{a,p}$ and the prior strength $n_0$.

The priors should be chosen to induce a conservative initial policy, since they determine the controls applied before sufficient data are collected. When rough parameter estimates are available, they can be used as priors. Otherwise, we recommend setting $\lambda_{i,p}\gg\lambda_{o,p}$ and $C_{i,p}\approx C_{o,p}$, which makes intensive monitoring more favorable, delaying the occurrence of critical states at the expense of higher monitoring cost. The prior strength $n_0$ controls how quickly the policy adapts: larger values slow parameter updates, prevent abrupt early policy shifts, and keep the policy close to its conservative initial form. The following theorem shows that if the priors are chosen as described above and the prior strength $n_0$ is sufficiently large, then, with high probability, the deployed threshold is never lower than the optimal threshold throughout the learning process. The assumptions and proof are provided in Appendix~III.
\begin{theorem}\label{thm:safety}
For any $\delta\in(0,1)$, if the prior strength satisfies $n_0 \geq \Omega(\log(1/\delta))$, then, with probability at least $1-\delta$, the deployed threshold satisfies $\tilde h_k\geq \hstar$ for all epochs $k$.
\end{theorem}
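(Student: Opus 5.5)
Our approach is to reduce the statement to a uniform concentration bound on the running parameter estimates, and then combine it with a monotonicity property of the optimal threshold in the parameters. We will assume, as the remark before the statement suggests (and as Appendix~III presumably formalizes), that the priors are conservative: the threshold computed from $(\lambda_{o,p},\lambda_{i,p},C_{o,p},C_{i,p})$ is strictly above $\hstar$ with some margin. Quantitatively, there should be a neighbourhood of the prior parameters, of radius $\epsilon>0$ in sup-norm, on which the optimal threshold stays $\geq \hstar$. The first step is a structural lemma: the optimal threshold is monotone in the parameters. It is nonincreasing in $\lambda_o$ and $C_i$, and nondecreasing in $\lambda_i$ and $C_o$. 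We would prove this by a coupling/value-iteration argument on the dynamic programming equation above, using that the threshold is where $C_i+\gamma[\lambda_i V(h^+)+\mu_i V(h^-)]-C_o-\gamma[\lambda_o V(h^+)+\mu_o V(h^-)]$ changes sign. With monotonicity in hand, it suffices to keep the estimates on the ``conservative side'' of a parameter box in which the computed threshold is $\geq\hstar$. If monotonicity is too delicate to prove, we would instead simply assume the existence of such a box of radius $\epsilon$ around the priors, with $\epsilon$ depending only on the model constants.

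The second step handles the exploration rule. If $\hat h_k\geq \hstar$, then $\tilde h_k\geq\hstar$ automatically whenever $\hat h_k\leq\hhigh$ or $\hat h_k<\hlow$, since exploration upward to $\hlow$ only raises the threshold. The remaining case is $\hat h_k>\hhigh$, where exploration moves the threshold down to $\hhigh$. Here we need $\hhigh\geq\hstar$, which we will take as one of the Appendix~III assumptions; it is clinically natural, since $\hhigh$ denotes ``sufficiently good health''. Under this assumption, the event $\{\hat h_k\geq \hstar\ \forall k\}$ implies the conclusion.

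The third step, which we expect to be the main obstacle, is a uniform-in-time deviation bound on the estimates. Write $\widehat\lambda_a=(\lambda_{a,p}n_0+\sum_{j\leq n}X_j)/(n_0+n)$, where $X_j$ are the Bernoulli improvement indicators. The number of samples $n$ is random and adaptively chosen, so the argument must be a martingale one. Since $M_n=\sum_{j\le n}(X_j-\lambda_a)$ is a martingale with bounded increments regardless of the policy, we apply a time-uniform Azuma/Freedman (or Ville's maximal inequality with a stitched boundary) bound of the form
\[
|M_n|\leq c\sqrt{(n_0+n)\log\bigl((n_0+n)/\delta\bigr)}\quad\text{for all } n,
\]
holding with probability $1-\delta/4$. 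The same bound is used for the cost sums, assuming bounded or sub-Gaussian cost samples. On this event,
\[
\widehat\lambda_a-\lambda_{a,p}=\frac{n(\bar X_n-\lambda_{a,p})}{n_0+n}\approx\frac{n(\lambda_a-\lambda_{a,p})+M_n}{n_0+n}.
\]
The drift term moves the estimate toward the true parameter. This is harmless, because the true parameters give the threshold $\hstar$ itself, and by monotonicity every convex combination of conservative priors and truth stays on the conservative side. Only the noise term $M_n/(n_0+n)$ must be kept below $\epsilon$. Its supremum over $n$ is of order $\sqrt{\log(n_0/\delta)/n_0}$, which is at most $\epsilon$ once $n_0\geq C\epsilon^{-2}\log(1/\delta)$, i.e.\ $n_0=\Omega(\log(1/\delta))$. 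Subtleties arise near the truth, where the margin vanishes as the estimates converge. There we would either use a strict-inequality/ tie-breaking convention at the truth (the threshold set is locally constant around the true parameters under a non-degeneracy assumption) or pass through an interpolation argument along the segment from prior to truth. A union bound over the four parameters gives total failure probability at most $\delta$.
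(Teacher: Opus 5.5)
Your overall architecture matches the paper's proof (Appendix~III). It uses monotonicity of the threshold, which the paper simply assumes (Assumption~\ref{ass:mono}) rather than proving. It uses a conservatively biased prior, a time-uniform martingale bound per coordinate with a union bound over the four coordinates, and $\hlow\le\hstar\le\hhigh$ to neutralize exploration.

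However, the region you try to keep the estimates in is the wrong one. Since $\thetahat_k\to\theta$, a box around $\theta_p$ only controls the estimates during an initial stretch of order $n_0$ samples, unless the box already contains a neighbourhood of $\theta$, which is the real hypothesis. So your fallback of assuming such a box instead of monotonicity fails.

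The right object is the one-sided region around the \emph{truth} (Lemma~\ref{lem:one-sided-safety}). Every $\theta'$ with $\lambda_i'\ge\lambda_i-\varepsilon_{\mathrm{safe}}$, $\lambda_o'\le\lambda_o+\varepsilon_{\mathrm{safe}}$, $C_i'\le C_i+\varepsilon_{\mathrm{safe}}$, $C_o'\ge C_o-\varepsilon_{\mathrm{safe}}$ has threshold at least $\hstar$. The reason is that the corner $\theta^-$ lies within $\varepsilon_{\mathrm{pol}}$ of $\theta$, so $h(\theta^-)=\hstar$ by local uniqueness (Assumption~\ref{ass:local-unique}), and monotonicity covers everything beyond the corner.

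This local constancy at $\theta$ is therefore not a tie-breaking patch for a ``vanishing margin''. It is the heart of the proof, and it supplies a margin $\varepsilon_{\mathrm{safe}}<\varepsilon_{\mathrm{pol}}$ that is uniform in $n$. Your ``interpolation along the segment'' alternative cannot replace it. The noise term moves the estimate off the segment and, as $n$ grows, onto the unsafe side of $\theta$, arbitrarily close to it.

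Also, ``conservative prior'' must mean coordinatewise: $\lambda_{i,p}\ge\lambda_i$, $\lambda_{o,p}\le\lambda_o$, $C_{i,p}\le C_i$, $C_{o,p}\ge C_o$, as in Assumption~\ref{ass:safe-prior}. A prior whose computed threshold merely exceeds $\hstar$ does not make the convex combinations of prior and truth conservative, even under monotonicity.

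With these corrections your decomposition closes. The drift term keeps each coordinate on the safe side of $\theta$, and $|M_n|/(n_0+n)\le\varepsilon_{\mathrm{safe}}$ places $\thetahat_k$ in the one-sided region.

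Your concentration step also differs from the paper's. You control $|M_n|$ two-sidedly by a stitched boundary of order $\sqrt{(n_0+n)\log((n_0+n)/\delta)}$ and require $|M_n|/(n_0+n)\le\varepsilon$, using only the sign of the prior bias.

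The paper observes that an unsafe crossing in, e.g., $\lambda_i$ forces $S_n<-n_0(\Delta+\varepsilon_{\mathrm{safe}})-n\varepsilon_{\mathrm{safe}}$, a one-sided \emph{linear} boundary. It then applies Ville's inequality to the supermartingale $\exp(-\eta S_n-\eta^2\sigma^2 n/2)$ with $\eta=2\varepsilon_{\mathrm{safe}}/\sigma^2$. This $\eta$ is chosen so that the per-sample slack exactly cancels the variance term. The result is a failure probability of $\exp(-2\varepsilon_{\mathrm{safe}}(\Delta+\varepsilon_{\mathrm{safe}})n_0/\sigma^2)$ directly, with no union over time, no $\log n_0$ factor, and an explicit benefit from the bias $\Delta$.

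Your bound is valid and still yields $n_0=\Omega(\log(1/\delta))$ with model-dependent constants. It is cruder, though, and it spends effort controlling the safe-side deviation, which never matters.
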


The above result states that, with probability at least $1-\delta$, the deployed policy is at least as safe as the optimal policy at every epoch, where safety is measured by the time taken to reach the critical state. Beyond the guarantees of the theorem, the exploration mechanism itself provides an additional safeguard: when the estimated optimal threshold falls below $\hlow$, the deployed threshold is raised to $\hlow$ with probability $\rho_k$, reducing the likelihood that the algorithm applies overly aggressive low-threshold policies during learning.

\section{Performance} \label{sec:performance}

In this section, we evaluate the performance and safety aspects of our algorithm through simulations. We examine its convergence to the optimal policy, and analyze how prior strength and initial threshold affect performance and safety.
 
To evaluate our algorithm, we simulate a Markov chain for each patient. Unless stated otherwise, each epoch of the algorithm consists of 75 timesteps distributed across patients. The plots of average time to reach the critical health state are obtained by averaging results over 200 independent runs. To plot the percentage of converged runs across epochs, we conduct 10 experiments, each comprising 50 runs. For each experiment, we compute the percentage of runs that have converged after every epoch and then average these percentages across all experiments. In all the plots the initial thresholds are chosen to be larger than the optimal threshold. A run is said to have converged if all estimated parameters are within 5\% of their true values. Complete experimental details are provided in Appendix IV.

\begin{figure*}[h!]
\centering
\begin{subfigure}[t]{0.4\textwidth}
  \centering
  \includegraphics[width=0.95\linewidth]{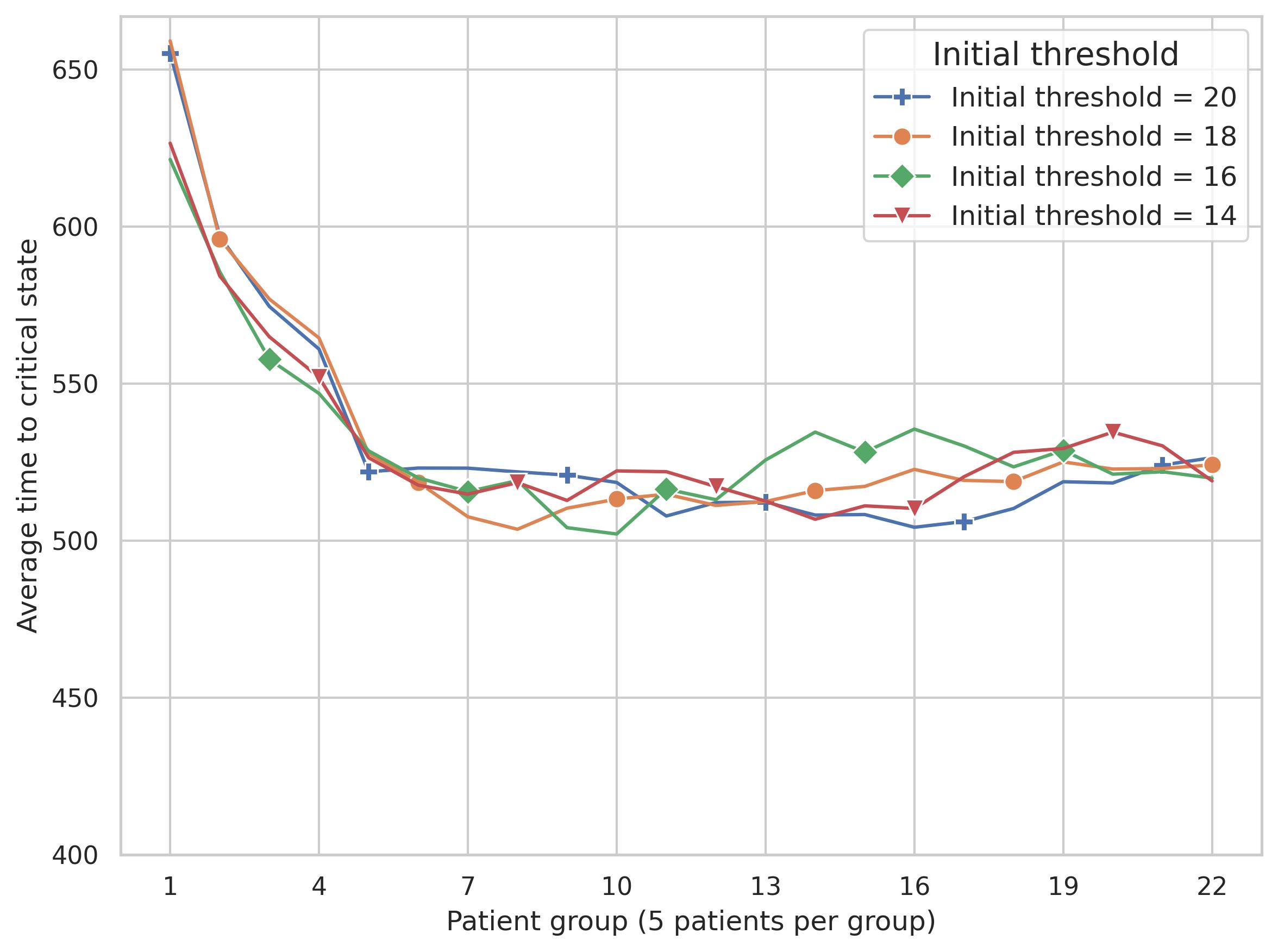}
  \caption{\centering Effect of initial threshold}
  \label{fig:avg_T_death_vs_initial_threshold}
\end{subfigure}%
\begin{subfigure}[t]{0.4\textwidth}
  \centering
  \includegraphics[width=0.95\linewidth]{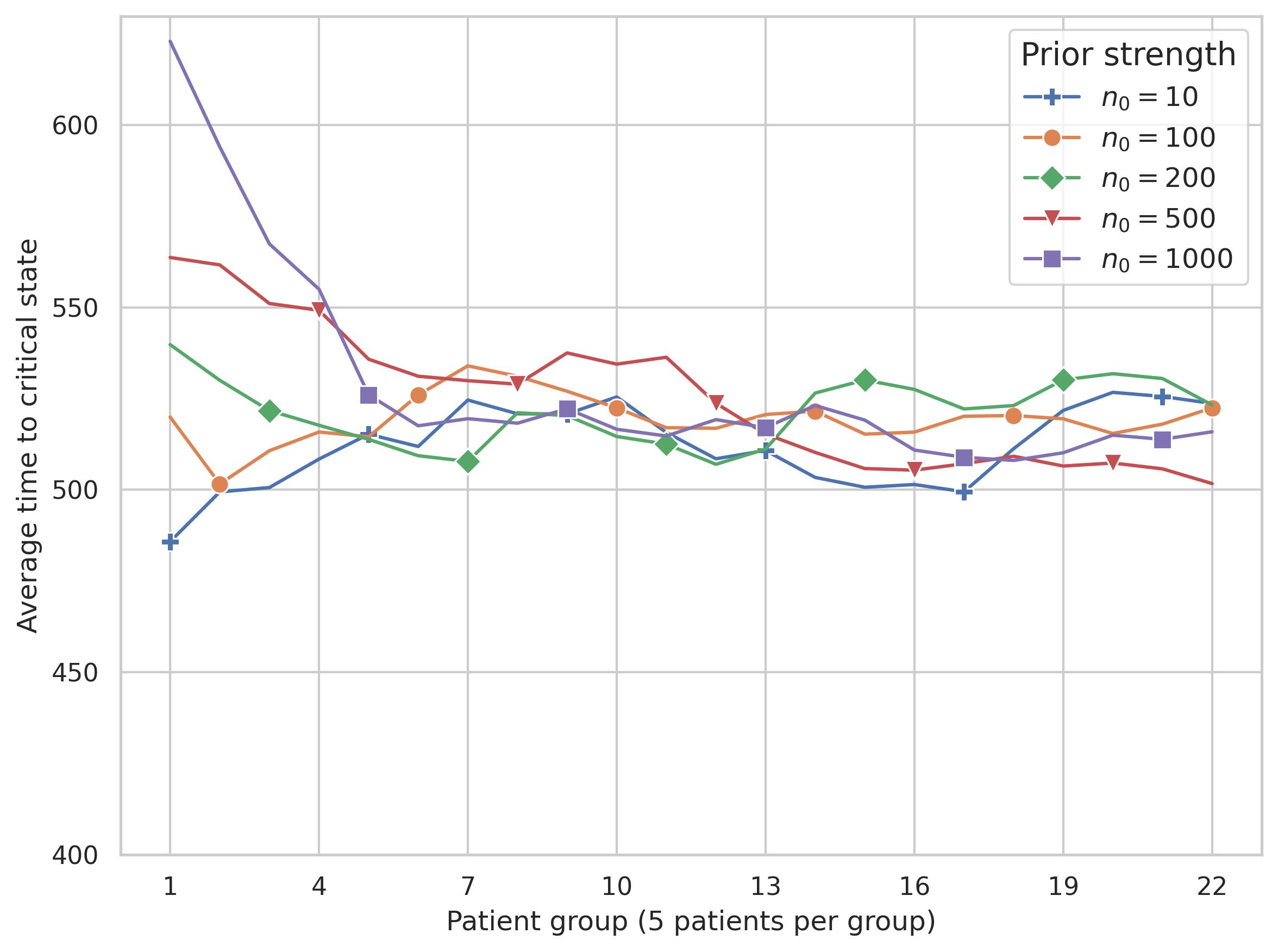}
  \caption{\centering Effect of prior strength}
  \label{fig:avg_T_death_vs_prior_strength}
\end{subfigure}
\caption{Average time to reach critical health for Algorithm~\ref{algo-main}. Patients are grouped only for reporting: group 1 contains the first five patients to enter the system, group 2 contains the next five, and so on. These groups do not represent synchronous cohorts, and patients within a group may enter and leave the system at different times.}
\label{fig:time_to_critical}
\end{figure*}

We observe that the algorithm converges to the optimal or a near-optimal policy within a few epochs in almost all runs. This behavior remains consistent across a wide range of parameter settings. In Figure~\ref{fig:convergence}, we plot the percentage of runs that have converged after each epoch. As expected, parameter estimates take longer to converge to the true values when the initial threshold is farther from the true threshold and when the prior strength is larger. However, this slower convergence of the parameter estimates has little practical impact, since the threshold policy is relatively robust to inaccuracies in the learned RPM parameters, allowing the algorithm to attain a near-optimal policy well before the underlying estimates have fully converged.

Figure~\ref{fig:time_to_critical} shows the average time to reach the critical health state under different initial thresholds and prior strengths. Across initial thresholds, the algorithm converges to the optimal policy $(\hthresh=12)$ after approximately the same number of epochs, indicating robustness to initialization. The results also show that, for sufficiently large prior strengths, the average time to reach the critical health state remains consistently high throughout learning, indicating that the algorithm maintains patient safety while adapting the monitoring policy.

\begin{table}[t]
\centering
\caption{Table for safety of the deployed threshold for different prior strengths. Each entry reports the number of runs, out of 100 runs, in which the deployed threshold fell below the corresponding cutoff in at least one epoch.}
\label{tab:safety}
\resizebox{\columnwidth}{!}{
\begin{tabular}{c c c c c c c}
\toprule 
\(n_0\) & \(h^\star\) & \(h_{\mathrm{init}}\) & \(\tilde h<h^\star\) & \(\tilde h<h^\star-1\) & \(\tilde h<h^\star-2\) & \(\tilde h<\hlow\) \\ 
\midrule
10   & 12 & 16 & 82 & 38 & 8 & 0 \\
100  & 12 & 16 & 50 & 1  & 0 & 0 \\
200  & 12 & 16 & 25 & 0  & 0 & 0 \\
500  & 12 & 16 & 6  & 0  & 0 & 0 \\
1000 & 12 & 16 & 0  & 0  & 0 & 0 \\
\bottomrule
\end{tabular}
}
\end{table}

When the algorithm is initialized with a sufficiently conservative threshold and large prior strength, the deployed threshold remains above the optimal threshold in almost every epoch, as shown in Table~\ref{tab:safety}. This behavior is consistent with the safety guarantee in Theorem~\ref{thm:safety}. We also report more severe threshold violations, where the deployed threshold falls below $h^\star-1$, $h^\star-2$, or $\hlow$. The number of such violations drops sharply as the severity of the violation increases. For example, across 100 independent simulation runs, even for a relatively small prior strength of $n_0=100$, the deployed threshold falls below $h^\star$ in 50 runs, but falls below $h^\star-1$ in only one run and never falls below $h^\star-2$ or $\hlow$. Increasing the prior strength further reduces the frequency of threshold violations, making the deployed policy increasingly conservative during the learning phase.

\section{Conclusions and Extensions}\label{sec:conclusions}
We develop an online model-based reinforcement learning approach for optimizing RPM policies when system parameters are initially unknown. By combining parameter estimation with a simple exploration scheme, our algorithm efficiently balances learning and patient safety. Theoretical guarantees and simulations demonstrate that the proposed approach effectively learns the transition probabilities and costs, and converges to the optimal threshold policy. It maintains low overall costs and safe monitoring decisions, representing a promising step toward more flexible and adaptive RPM tools.

An important next step is to implement the proposed algorithm on real-world patient data to assess its practical performance. While our simulations demonstrate its effectiveness, testing on clinical data will reveal its robustness to real-world variability and noise. Such validation will be essential for integrating the approach into actual RPM workflows and guiding its deployment in clinical settings.

This paper focuses on the special case where RPM parameters are independent of health states. Although the proposed algorithm can be extended to health-state-dependent parameters, it is practical only for small state spaces, since the number of parameters and required samples grow quickly with the number and dimension of health states. Addressing this curse of dimensionality is an important future direction, potentially through approximate dynamic programming, deep reinforcement learning, or structural assumptions such as smoothness and parameter sharing across neighboring health states \cite{Sutton,Bertsekas,approx-dp,deep-rl}.

\bibliographystyle{IEEEtran}
\bibliography{refs}

\appendices

\section{Proof for Theorem 1}

\begin{proof}
Theorem 1 from \cite{chandak1} states that when $H\rightarrow\infty$, the optimal policy is $\pi_o$ when \begin{equation}\label{cond-pi-o}
    \gamma(\lambda_i-\lambda_o)\leq C_i/C_c.
\end{equation} Here $\pi_o$ is the policy where the patient always stays in ordinary monitoring, i.e., threshold-based policy $\pithresh$ with $\hthresh=0$. Theorem 2 from \cite{chandak1} states that the optimal policy is $\pithresh$ with some $\hthresh$ when the following are satisfied - 
\begin{equation}\label{cond-pithresh-1}
    \gamma(\lambda_i-\lambda_o)> C_i/C_c,
\end{equation}
and
\begin{equation}\label{cond-pithresh-2}
    \frac{\gamma\mu_o(1+\gamma\mu_o)}{1-\gamma^2\lambda_o\mu_o}\leq 1.
\end{equation}
Note that \eqref{cond-pithresh-1} is the complement of \eqref{cond-pi-o} and hence one of them is always true. Hence we only need to verify that equation \eqref{cond-pithresh-2} is satisfied. Condition \eqref{cond-pithresh-2} is satisfied when $$\gamma\leq \frac{-1+\sqrt{1+4/\mu_o}}{2},$$ which is satisfied for sufficiently large $1/\gamma$. This completes the proof for Theorem 1.
\end{proof}

\section{Convergence Guarantees}
We use the same model, notation, threshold policy convention, and epoch wise learning algorithm as defined in the main paper. In particular, $h^\star$ denotes the optimal threshold under the true parameters, $\hat h_k$ denotes the optimal threshold computed corresponding to the current parameter estimates at epoch $k$. We now state the additional assumptions needed for the safety and almost sure convergence arguments.

\begin{assumption}[]\label{ass:finite-state-epoch}
The health-state space is finite and is given by $\Hcal=\{0,1,\ldots,H\}$, where $0$ is the critical state and $H$ is the maximum health state. Each epoch consists of $L$ patient-state transitions, where $L>H$. Moreover, the one-step upward and downward movement probabilities under both actions are positive, i.e., $p_{\min}=\min\{\lambda_i,\lambda_o,\mu_i,\mu_o\}>0.$
\end{assumption}

\begin{assumption}[]\label{ass:rho-diverge}
At epoch $k$, if the computed threshold $\hat{h}_k$ lies outside the interval $[\hlow, \hhigh]$, the algorithm implements the threshold $\hlow$ or $\hhigh$  with probability $\rho_k$. These probabilities satisfy
$\sum_{k=1}^{\infty}\rho_k=\infty.$
\end{assumption}

\begin{assumption}[]\label{ass:local-unique}
Let $\theta=(\lambda_i,\lambda_o,C_i,C_o)$ denote the true parameter vector. There exists $\varepsilon_{\mathrm{pol}}>0$ such that for every parameter vector $\theta'$ satisfying $\|\theta'-\theta\|_\infty<\varepsilon_{\mathrm{pol}}$,
the unique optimal threshold under $\theta'$ is $\hstar$.
\end{assumption}

\begin{assumption}[]\label{ass:obs-prior}
The cost noise under each action $a\in\{i,o\}$ is sub-Gaussian. That is, if $G_{a,t}$ denotes the cost observed on the $t$-th real sample of action $a$, then $G_{a,t}-C_a$ is sub-Gaussian. Let $\sigma^2<\infty$ denote a common conditional sub-Gaussian variance proxy for the cost noises and the transition observation noises.
\end{assumption}
We can now state the main result of this appendix.
\begin{theorem}[]\label{thm:as-policy}
Suppose $\hlow\leq\hstar\leq \hhigh$. Under Assumptions~\ref{ass:finite-state-epoch}, \ref{ass:rho-diverge}, \ref{ass:local-unique}, and~\ref{ass:obs-prior}, there exists an almost surely finite random epoch $K_G$ such that, for all $k\geq K_G$, the threshold deployed by the algorithm at epoch $k$ is equal to the optimal threshold $\hstar$.
\end{theorem}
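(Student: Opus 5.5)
The plan is to reduce the claim to showing that both actions are sampled infinitely often almost surely. Then I apply the strong law of large numbers (or sub-Gaussian concentration plus Borel--Cantelli) to conclude $\widehat\theta_k\to\theta$. Finally, Assumption~\ref{ass:local-unique} turns parameter convergence into threshold convergence. Concretely, once $\|\widehat\theta_k-\theta\|_\infty<\varepsilon_{\mathrm{pol}}$, the computed threshold is $\hat h_k=\hstar\in[\hlow,\hhigh]$. By the deployment rule, $\tilde h_k=\hat h_k=\hstar$ deterministically, with no randomization because we are inside $[\hlow,\hhigh]$. So $K_G$ can be taken as the first epoch after which the estimates stay in the $\varepsilon_{\mathrm{pol}}$-ball. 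This epoch is finite a.s. if the estimates converge a.s.

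\textbf{Step 1 (both actions are used in many epochs).} In every epoch, one of three cases holds:
\begin{itemize}
\item $\hat h_k\in[\hlow,\hhigh]$, so the deployed threshold is $\hat h_k$;
\item $\hat h_k$ lies outside this interval, and with probability $\rho_k$ the threshold $\hlow$ or $\hhigh$ is deployed;
\item otherwise the extreme threshold $\hat h_k$ itself is used.
\end{itemize}
Call an epoch \emph{good} if the deployed threshold lies in $[\hlow,\hhigh]$. Conditioned on the past $\mathcal F_{k-1}$, epoch $k$ is good with probability at least $\rho_k$, since $\rho_k\le1$ covers the case where $\hat h_k$ is already inside. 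Since $\sum\rho_k=\infty$ by Assumption~\ref{ass:rho-diverge}, the conditional (L\'evy) Borel--Cantelli lemma gives infinitely many good epochs a.s. I need the mild nondegeneracy $1\le\hlow$ and $\hhigh<H$, which I would add if needed. Under that condition, a good-epoch threshold $\bar h$ satisfies $1\le \bar h<H$, so both the intensive region $\{1,\dots,\bar h\}$ and the ordinary region $\{\bar h+1,\dots,H\}$ are nonempty.

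\textbf{Step 2 (both actions are actually sampled).} I then show that in each good epoch, each action is sampled at least once with conditional probability at least $p_{\min}^{H}>0$. This uses Assumption~\ref{ass:finite-state-epoch}: from any nonabsorbing state, a path of at most $H$ steps reaches the other region, each step having probability $\ge p_{\min}$. Since $L>H$, this fits within the epoch. A new patient entering the epoch needs slightly more care about its starting state, but the same bound applies. Combined with another application of conditional Borel--Cantelli, $N_i,N_o\to\infty$ a.s.

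\textbf{Step 3 (consistency of the estimates).} The increments $\mathbf 1\{\text{up}\}-\lambda_a$ and $G_{a,t}-C_a$ form martingale differences indexed by the sample count of action $a$. This holds because the decision to sample $a$ is predictable, and the Markov structure makes the outcome depend only on $a$. Their conditional laws are sub-Gaussian with proxy $\sigma^2$ by Assumption~\ref{ass:obs-prior}. A martingale SLLN, or Azuma/Freedman bounds summed over $n$ with Borel--Cantelli, gives $\frac1n\sum_{t\le n}(\cdot)\to0$ a.s. The prior term contributes $n_0(\lambda_{a,p}-\lambda_a)/N_a\to0$, so $\widehat\lambda_a\to\lambda_a$ and $\widehat C_a\to C_a$ a.s.

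\textbf{Main obstacle.} I expect the main obstacle to be Step 2, that is, rigorously handling the adaptivity. Patients enter and leave, states at the epoch's start depend on the history, and the threshold is a random function of past data. I would handle this by working throughout with the filtration of completed epochs and using only conditional lower bounds. Those bounds hold uniformly over the history, which is exactly what the conditional Borel--Cantelli lemma requires. A further subtlety is absorption at $h=0$ killing patients. I would address it by the standing assumption that the system always contains a patient to monitor, with $L$ transitions per epoch.
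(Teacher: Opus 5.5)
Your proposal is correct and matches the paper's proof essentially step for step. Both arguments get infinitely-often sampling of each action from the conditional (extended) Borel--Cantelli lemma, using $\sum_k\rho_k=\infty$ and the path bound $p_{\min}^H$; your nondegeneracy condition $1\le\hlow$, $\hhigh\le H-1$ is one the paper also invokes in its sampling lemma. Both then use a martingale strong law with vanishing prior weight to get $\widehat\theta_k\to\theta$, and conclude from Assumption~\ref{ass:local-unique} and the fact that thresholds in $[\hlow,\hhigh]$ are deployed unchanged. The only cosmetic difference is that the paper merges your Steps 1 and 2 into the single bound $\PP(X_{a,k}\ge 1\mid\Fcal_{k-1})\ge\rho_k p_{\min}^H$ and applies Borel--Cantelli once.
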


\subsection{Proof}
We first prove that each action is sampled infinitely often. Then we use this to prove convergence of the parameter estimates to the true parameters, eventual correctness of the computed threshold, and finally eventual correctness of the deployed threshold.
\begin{lemma}[]\label{lem:sample-count-as}
Let $X_{a,k}$ denote the number of real samples of action $a\in\{i,o\}$ collected during epoch $k$ and $p_{\min}=\min\{\lambda_i,\lambda_o,\mu_i,\mu_o\}>0.$
Then, for each action $a\in\{i,o\}$, there exists $c_a>0$ such that
\begin{align}\label{eq:cond-sampling-bound}
\EE[X_{a,k}\mid\Fcal_{k-1}]\geq c_a\rho_k.
\end{align}
Moreover, for each action $a\in\{i,o\}$, $N_{a,j}\to\infty\text{ a.s.}$
\end{lemma}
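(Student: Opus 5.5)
The plan is to bound the conditional expectation from below by looking only at the exploration event, and then to use a conditional Borel--Cantelli argument to get $N_{a,j}\to\infty$.

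\textbf{Step 1: the exploration event.} Fix epoch $k$ and condition on $\Fcal_{k-1}$, which determines $\hat h_k$.

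If $\hlow\le\hat h_k\le\hhigh$, then $\tilde h_k=\hat h_k$ deterministically, and this threshold lies in $[\hlow,\hhigh]$. Otherwise, with conditional probability $\rho_k$, the deployed threshold is $\hlow$ or $\hhigh$.

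In either case, with conditional probability at least $\rho_k$, the deployed threshold $\tilde h_k$ lies in $[\hlow,\hhigh]$. I will assume $1\le\hlow\le\hhigh<H$, which is needed for the claim to be meaningful.

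\textbf{Step 2: a uniform lower bound on samples under such a threshold.} Let $\bar h\in[\hlow,\hhigh]$ be any such threshold. Then $\{1,\dots,\bar h\}$ uses action $i$ and $\{\bar h+1,\dots,H\}$ uses action $o$, and both sets are nonempty.

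The epoch consists of $L>H$ transitions. Consider the first $H$ of them, starting from whatever state the current patient is in (new patients enter at some state $\ge1$).

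\emph{Action $i$.} If the start state is $h\le\bar h$, the first step already samples $i$. Otherwise, with probability at least $p_{\min}^{H}$, the health moves down $h-\bar h\le H$ consecutive times, reaching $\bar h$ and then taking action $i$. Deaths cannot intervene, because moving down from above $\bar h\ge1$ reaches $\bar h$ before $0$.

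\emph{Action $o$.} If the start state is $h\le\bar h$, then with probability at least $p_{\min}^{H}$ the health moves up to $\bar h+1\le H$, after which $o$ is sampled.

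So each action is sampled at least once within the epoch with probability at least $q:=p_{\min}^{H}$, and hence $\EE[X_{a,k}\mid\Fcal_{k-1}]\ge q\rho_k$. This gives \eqref{eq:cond-sampling-bound} with $c_a=q$.

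\textbf{Main obstacle: patient turnover.} The delicate point is that a patient may hit $0$ mid-epoch and be replaced. I will handle this by considering whichever patient is active at the start of a block of $H$ transitions, and by assuming, as implicit in Assumption~\ref{ass:finite-state-epoch}, that the $L$ transitions are counted on live patients. If needed, I would instead apply the argument to a single patient's first $H$ steps in the epoch.

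\textbf{Step 3: infinitely many samples.} Let $E_{a,k}=\{X_{a,k}\ge1\}$. By Steps 1--2, $\PP(E_{a,k}\mid\Fcal_{k-1})\ge q\rho_k$, and by Assumption~\ref{ass:rho-diverge} we have $\sum_k q\rho_k=\infty$.

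The events $E_{a,k}$ are $\Fcal_k$-measurable. Lévy's conditional Borel--Cantelli lemma (the extension of Borel--Cantelli to filtrations) then gives that $E_{a,k}$ occurs infinitely often almost surely. Since $N_{a,j}\ge n_0+\sum_{k\le j}\mathbf{1}_{E_{a,k}}$, we conclude $N_{a,j}\to\infty$ almost surely.
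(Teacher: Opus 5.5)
Your proposal is correct and follows essentially the same route as the paper. Both arguments work on the exploration event of conditional probability at least $\rho_k$, where both action regions are nonempty because $1\le\hlow\le\hhigh\le H-1$; both reach the relevant region along a monotone path of at most $H$ steps with probability at least $p_{\min}^H$, and both finish with the conditional (L\'evy) Borel--Cantelli lemma. The differences are minor. You apply Borel--Cantelli directly to $\PP(X_{a,k}\ge1\mid\Fcal_{k-1})\ge p_{\min}^H\rho_k$, whereas the paper detours back through the expectation bound and $X_{a,k}\le L\mathbf{1}_{\{X_{a,k}\ge1\}}$. You also explicitly flag the patient-turnover issue, which the paper's proof leaves implicit.
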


\begin{proof}
Fix an action $a\in\{i,o\}$ and an epoch $k$. With a probability at least $\rho_k$, the deployed threshold has at least one health state $s_a$ at which action $a$ is used. This is possible because $\hlow\geq1$ and $\hhigh\leq H-1$, so the deployed thresholds are such that both actions have nonempty state regions. Now fix any starting health state $s\in\{1,\ldots,H\}$. Since the state space has size $H$, the target state $s_a$ can be reached from $s$ in at most $H-1$ one-step moves by moving monotonically toward $s_a$. Each required one-step move has probability at least $p_{\min}$. Since $L> H$, this path fits inside one epoch. Hence, conditional on using such a threshold, the probability of visiting a state where action $a$ is used during the epoch is at least $p_{\min}^H$. Therefore,
$\PP(X_{a,k}\geq1\mid\Fcal_{k-1})\geq \rho_k p_{\min}^H.$
Since $X_{a,k}$ is a nonnegative integer-valued random variable,
\begin{align*}
\EE[X_{a,k}\mid\Fcal_{k-1}]
&\geq \PP(X_{a,k}\geq1\mid\Fcal_{k-1})\\
&\geq \rho_k p_{\min}^H.
\end{align*}
Thus first part of the lemma holds with $c_a=p_{\min}^H>0$. Since each epoch has $L$ patient-state transitions, $0\leq X_{a,k}\leq L.$ Since $X_{a,k}\leq L\mathbf{1}_{\{X_{a,k}\geq1\}}$, we have
\begin{align*}
\EE[X_{a,k}\mid\Fcal_{k-1}]\leq L\PP(\{X_{a,k}\geq1\}\mid\Fcal_{k-1}).
\end{align*}
Therefore, by \eqref{eq:cond-sampling-bound},
\begin{align*}
\PP(\{X_{a,k}\geq1\}\mid\Fcal_{k-1})\geq\frac{1}{L}\EE[X_{a,k}\mid\Fcal_{k-1}]\geq\frac{c_a}{L}\rho_k.
\end{align*}
Since $\sum_{k=1}^{\infty}\rho_k=\infty$, it follows that
\begin{align*}
\sum_{k=1}^{\infty}\PP(\{X_{a,k}\geq1\}\mid\Fcal_{k-1})=\infty\text{ a.s.}
\end{align*}
By the extended Borel-Cantelli lemma~\cite[Corollary 5.29]{breiman1992probability}, the events $\{X_{a,k}\geq1\}$ occur infinitely often almost surely. Hence $\sum_{k=1}^{j}X_{a,k}\to\infty\text{ a.s.}$
Since $N_{a,j}=n_0+\sum_{k=1}^{j}X_{a,k},$
we get $N_{a,j}\to\infty\text{ a.s.}$ This proves the lemma.
\end{proof}

\begin{lemma}[]\label{lem:param-consistency}
The parameter estimates converge to the true parameter vector almost surely, i.e., $\thetahat_k\to\theta \text{ a.s.}$ Moreover, there exists an almost surely finite random epoch $K_G$ such that $\hhat_k=\hstar \text{ for all }k\geq K_G.$
\end{lemma}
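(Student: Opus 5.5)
We prove Lemma~\ref{lem:param-consistency} in two stages. First we show that each coordinate of $\thetahat_k$ converges almost surely. Then we pass to the computed threshold using Assumption~\ref{ass:local-unique}.

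\textbf{Step 1: consistency of the estimates.} Fix $a\in\{i,o\}$ and write the estimates after $N_{a}$ total counts as
\begin{align*}
\widehat\lambda_a=\frac{\lambda_{a,p}n_0+\sum_{t=1}^{N_a-n_0}Y_{a,t}}{N_a},\qquad
\widehat C_a=\frac{C_{a,p}n_0+\sum_{t=1}^{N_a-n_0}G_{a,t}}{N_a}.
\end{align*}
Here $Y_{a,t}\in\{0,1\}$ indicates an upward move (or staying at $H$) on the $t$-th real sample of action $a$.

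Index the real samples of action $a$ in the order collected, and let $\mathcal{G}_{a,t}$ be the history up to the $t$-th such sample. The action is chosen from past information, through the deployed threshold and the current state. Hence, conditionally on $\mathcal{G}_{a,t-1}$:
\begin{itemize}
\item $Y_{a,t}-\lambda_a$ is a bounded martingale difference;
\item $G_{a,t}-C_a$ is a martingale difference, sub-Gaussian with proxy $\sigma^2$ by Assumption~\ref{ass:obs-prior}.
\end{itemize}
By the martingale strong law (for instance, $\sum_t t^{-2}\EE[(Y_{a,t}-\lambda_a)^2\mid\mathcal{G}_{a,t-1}]\le\sum_t \sigma^2/t^2<\infty$ together with Kronecker's lemma), we get $\frac1n\sum_{t\le n}(Y_{a,t}-\lambda_a)\to0$ a.s., and likewise for the costs.

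Lemma~\ref{lem:sample-count-as} gives $N_{a,k}\to\infty$ a.s. The prior contributions $\lambda_{a,p}n_0/N_a$ and $C_{a,p}n_0/N_a$ therefore vanish. Composing the strong law with the random but diverging index $N_{a,k}-n_0$ then yields $\widehat\lambda_a\to\lambda_a$ and $\widehat C_a\to C_a$ a.s.

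\textbf{The main subtlety} is that the sample sequence is adaptively generated. The strong law must therefore be applied to the per-action sample sequence indexed by sample count, not by epoch. We must also check that the subsequence $N_{a,k}$ along epochs is a valid evaluation of an a.s.\ convergent sequence. This holds because $N_{a,k}$ is nondecreasing and unbounded a.s., so on the full-measure event where both the strong law and $N_{a,k}\to\infty$ hold, the composition converges.

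\textbf{Step 2: eventual correctness of the computed threshold.} On the event where $\thetahat_k\to\theta$, there is a finite $K_G(\omega)$ such that $\|\thetahat_k-\theta\|_\infty<\varepsilon_{\mathrm{pol}}$ for all $k\ge K_G$. By Assumption~\ref{ass:local-unique}, the unique optimal threshold under $\thetahat_k$ is $\hstar$. The algorithm computes the optimal threshold under the current estimates, so $\hhat_k=\hstar$ for all $k\ge K_G$. Since the event has probability one, $K_G$ is almost surely finite.

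One may also note that $\hat\lambda_a$ stays inside $[0,1]$ automatically, so the estimated MDP is always well defined.
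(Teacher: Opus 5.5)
Your proposal is correct and follows essentially the same route as the paper. You apply a martingale strong law to the per-action sample sequences (justified via conditional-variance summability and Kronecker's lemma, where the paper cites a reference), compose it with the diverging random index $N_{a,k}-n_0$ from Lemma~\ref{lem:sample-count-as} so the prior weight vanishes, and then invoke Assumption~\ref{ass:local-unique} to conclude $\hhat_k=\hstar$ for all sufficiently large $k$.
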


\begin{proof}
Fix an action $a\in\{i,o\}$. Let $M_{a,k}=N_{a,k}-n_0$ denote the number of real samples of action $a$ collected up to the end of epoch $k$. By Lemma~\ref{lem:sample-count-as}, $M_{a,k}\to\infty \text{ a.s.}$ Let $Y_{a,t}$ be the upward-movement indicator observed on the $t$-th real sample of action $a$. $\EE[Y_{a,t}\mid\Fcal_{a,t-1}]=\lambda_a,$ where $\Fcal_{a,t-1}$ denotes the information available before the $t$-th real sample of action $a$. Therefore $Y_{a,t}-\lambda_a$ is a bounded martingale difference sequence. By the martingale strong law of large numbers~\cite{atchade2009slln},
\begin{align*}
\frac{1}{n}\sum_{t=1}^n(Y_{a,t}-\lambda_a)\to0 \qquad\text{a.s.}
\end{align*}
Since $M_{a,k}\to\infty$ almost surely, the same convergence holds along the random subsequence $n=M_{a,k}$,
\begin{align*}
\frac{1}{M_{a,k}}\sum_{t=1}^{M_{a,k}}Y_{a,t} \to\lambda_a \qquad\text{a.s.}
\end{align*}
The prior-weighted transition estimate can be written as
\begin{align*}
\hat\lambda_{a,k} = \frac{n_0\lambda_{a,p}+\sum_{t=1}^{M_{a,k}}Y_{a,t}} {n_0+M_{a,k}}.
\end{align*}
Since $n_0<\infty$, $\lambda_{a,p}$ is finite, and $M_{a,k}\to\infty$, the prior contribution vanishes.
Therefore $\hat\lambda_{a,k}\to\lambda_a \text{ a.s.}$ The same argument applies to the cost estimate. Let $G_{a,t}$ be the cost observed on the $t$-th real sample of action $a$. By Assumption~\ref{ass:obs-prior}, $\EE[G_{a,t}\mid\Fcal_{a,t-1}]=C_a,$
and $G_{a,t}-C_a$ is sub-Gaussian. In particular, it has uniformly bounded second moment. Hence the martingale strong law gives
\begin{align*}
\frac{1}{n}\sum_{t=1}^n(G_{a,t}-C_a)\to0 \qquad\text{a.s.}
\end{align*}
As before, evaluating this convergence along $n=M_{a,k}$ and as the prior weight is finite, $\hat C_{a,k}\to C_a \text{ a.s.}$ This holds for both $a=i$ and $a=o$. Therefore $\thetahat_k \to \theta \text{ a.s.}$
As $\thetahat_k\to\theta \text{ a.s.}$, there exists a finite random epoch $K_G$ such that for every $k\geq K_G$,
$\|\thetahat_k-\theta\|_\infty<\varepsilon_{\mathrm{pol}}.$
By Assumption~\ref{ass:local-unique}, whenever this inequality holds, the unique optimal threshold under $\thetahat_k$ is $\hstar$. So, it follows that $\hhat_k=\hstar \text{ for all }k\geq K_G.$
This proves the lemma.
\end{proof}

\begin{proof}[Proof of Theorem~\ref{thm:as-policy}]
By Lemma~\ref{lem:param-consistency}, there exists an almost surely finite random epoch $K_G$ such that $\hhat_k=\hstar \text{ for all }k\geq K_G.$ Since $\hlow\leq\hstar\leq \hhigh$, the optimal threshold lies in the region where the algorithm does not change the threshold. Hence, for all $k\geq K_G$, the threshold deployed by the algorithm at epoch $k$ is equal to the computed threshold $\hhat_k$, and both are equal to the optimal threshold $\hstar$.
\end{proof}

\section{Safety Guarantees}
The safety guarantee requires two additional assumptions stated below, beyond those stated in Appendix II. 

\begin{assumption}[]\label{ass:mono}
The optimal integer threshold $\hstar$ is non-decreasing in $\lambda_i$ and $C_o$, and non-increasing in $\lambda_o$ and $C_i$.
\end{assumption}

\begin{assumption}[]\label{ass:safe-prior}
The prior parameter vector $\theta_p=(\lambda_{i,p},\lambda_{o,p},C_{i,p},C_{o,p},C_c)$ is biased by $\Delta>0$ such that
$\lambda_{i,p}\geq\lambda_i+\Delta,\lambda_{o,p}\leq\lambda_o-\Delta,C_{i,p}\leq C_i-\Delta$ and $C_{o,p}\geq C_o+\Delta.$
\end{assumption}
We can now state the main result of this appendix.
\begin{theorem}[]\label{thm:global-safety}
Suppose Assumptions~\ref{ass:local-unique}, \ref{ass:obs-prior}, \ref{ass:mono}, and~\ref{ass:safe-prior} hold, and suppose $\hlow\leq\hstar\leq\hhigh$. Fix any $0<\varepsilon_{\mathrm{safe}}<\varepsilon_{\mathrm{pol}}.$
If
\begin{align*}
n_0\geq \frac{\sigma^2\log(4/\delta)} {2\varepsilon_{\mathrm{safe}}(\Delta+\varepsilon_{\mathrm{safe}})},
\end{align*}
then, with probability at least $1-\delta$, the threshold deployed by the algorithm at epoch $k$ is at least $\hstar$ for all $k\geq1$.
\end{theorem}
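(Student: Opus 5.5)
The plan is to show that, with high probability, every estimate stays on the conservative side of the truth for the entire run. Concretely, I will show that for all epochs $k$ and both actions $a\in\{i,o\}$,
\[
\hat\lambda_{i,k}\geq\lambda_i,\quad \hat\lambda_{o,k}\leq\lambda_o,\quad \hat C_{i,k}\leq C_i,\quad \hat C_{o,k}\geq C_o .
\]
Assumption~\ref{ass:mono} then gives $\hat h_k\geq\hstar$ for every $k$: move one coordinate at a time from $\theta$ to $\thetahat_k$, each move pushing the threshold weakly upward.

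The deployed threshold needs a short separate check. If $\hat h_k\in[\hlow,\hhigh]$ or $\hat h_k>\hhigh$, the deployed value is $\hat h_k$ or $\hhigh$, and both are at least $\hstar$ because $\hstar\leq\hhigh$. The case $\hat h_k<\hlow$ cannot occur, since $\hat h_k\geq\hstar\geq\hlow$. So $\tilde h_k\geq\hstar$ for all $k$ on the good event.

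For the main step, fix one coordinate, say $\lambda_i$. Write $Z_t=Y_{i,t}-\lambda_i$ and $S_m=\sum_{t\le m}Z_t$, a martingale with sub-Gaussian increments of variance proxy $\sigma^2$. After $m$ real samples,
\[
\hat\lambda_i-\lambda_i=\frac{n_0(\lambda_{i,p}-\lambda_i)+S_m}{n_0+m}\geq\frac{n_0\Delta+S_m}{n_0+m}.
\]
It therefore suffices that $S_m>-n_0\Delta$ for all $m\geq0$ simultaneously, and in fact I will ask for the slightly stronger margin $S_m\geq-(n_0\Delta+\varepsilon_{\mathrm{safe}}\,m)$. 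That margin matches the constant in the theorem, since the bound then forces $\hat\lambda_i-\lambda_i\geq-\varepsilon_{\mathrm{safe}}\,m/(n_0+m)$, and combined with the prior bias this keeps the estimate on the correct side. The other three coordinates are handled identically, with signs flipped for $\lambda_o$ and $C_i$.

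The main obstacle is the uniform-in-time bound over an unbounded, random number of samples. I plan to use the exponential supermartingale $\exp(\eta S_m-\eta^2\sigma^2m/2)$ together with Ville's maximal inequality, which controls the event that $S_m$ ever crosses the linear boundary $-(A+Bm)$. Taking $A=n_0\Delta$ and $B=\varepsilon_{\mathrm{safe}}$, the crossing probability is at most $\exp(-2AB/\sigma^2)$ after optimizing $\eta=2B/\sigma^2$. The resulting constant must be reconciled with the stated $\log(4/\delta)/(2\varepsilon_{\mathrm{safe}}(\Delta+\varepsilon_{\mathrm{safe}}))$. I would try the boundary $n_0(\Delta+\varepsilon_{\mathrm{safe}})+\varepsilon_{\mathrm{safe}}m$, or an adjusted slope, and fix the constants so that each coordinate fails with probability at most $\delta/4$.

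Because the samples of a given action arrive at random times determined by the policy, I will index each martingale by that action's own sample count; optional sampling preserves the supermartingale property. A union bound over the four coordinates then gives total failure probability at most $\delta$.
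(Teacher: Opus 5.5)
There is a genuine gap in your reduction. You aim to keep every estimate on the conservative side of the \emph{true} parameter, e.g.\ $\hat\lambda_{i,k}\geq\lambda_i$ for all $k$. For this coordinate that means $S_m>-n_0(\lambda_{i,p}-\lambda_i)$ for every $m$, and that event cannot be secured with probability $1-\delta$. The walk $S_m$ has mean zero and nondegenerate bounded increments, so $\liminf_m S_m=-\infty$ almost surely. Once sampling is unbounded (as the exploration scheme ensures, Lemma~\ref{lem:sample-count-as}), the estimate therefore falls below $\lambda_i$ at some epoch with probability one. This is forced anyway, since $\thetahat_k\to\theta$.

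A time-uniform bound via Ville's inequality needs a boundary that grows with $m$, and that growth costs slack. Your margin $S_m\geq-(n_0\Delta+\varepsilon_{\mathrm{safe}}m)$ is weaker, not stronger, than $S_m>-n_0\Delta$. As your own computation shows, it yields only $\hat\lambda_i>\lambda_i-\varepsilon_{\mathrm{safe}}$. The prior bias was already used up in that computation, so it cannot restore $\hat\lambda_i\geq\lambda_i$. The boundary $n_0(\Delta+\varepsilon_{\mathrm{safe}})+\varepsilon_{\mathrm{safe}}m$ does reproduce the stated constant, but it too gives only $\hat\lambda_i\geq\lambda_i-\varepsilon_{\mathrm{safe}}$.

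The missing idea is the one-sided safety lemma, which converts $\varepsilon_{\mathrm{safe}}$-slack into threshold safety. Let $\theta^-=(\lambda_i-\varepsilon_{\mathrm{safe}},\lambda_o+\varepsilon_{\mathrm{safe}},C_i+\varepsilon_{\mathrm{safe}},C_o-\varepsilon_{\mathrm{safe}})$. Since $\|\theta^--\theta\|_\infty=\varepsilon_{\mathrm{safe}}<\varepsilon_{\mathrm{pol}}$, Assumption~\ref{ass:local-unique} gives $h(\theta^-)=\hstar$. Then Assumption~\ref{ass:mono}, applied coordinatewise starting from $\theta^-$ rather than from $\theta$, gives $\hhat_k\geq\hstar$ whenever $\thetahat_k$ lies on the safe side of $\theta^-$ in each coordinate. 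Your proposal never uses Assumption~\ref{ass:local-unique} or the hypothesis $\varepsilon_{\mathrm{safe}}<\varepsilon_{\mathrm{pol}}$, and this is exactly where they are needed.

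With that lemma in place, the rest of your plan coincides with the paper's argument:
\begin{itemize}
\item the supermartingale $\exp(-\eta S_m-\eta^2\sigma^2m/2)$ (note the minus sign for a downward crossing), with $\eta=2\varepsilon_{\mathrm{safe}}/\sigma^2$;
\item the boundary $n_0(\Delta+\varepsilon_{\mathrm{safe}})+\varepsilon_{\mathrm{safe}}m$, for which Ville's inequality gives per-coordinate failure probability $\exp(-2\varepsilon_{\mathrm{safe}}n_0(\Delta+\varepsilon_{\mathrm{safe}})/\sigma^2)\leq\delta/4$;
\item a union bound over the four coordinates;
\item your case analysis of the deployed threshold, which is correct.
\end{itemize}
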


\subsection{Proof}
We first show that the local uniqueness assumption, together with monotonicity, implies a one-sided safety region around the true parameter vector.
\begin{lemma}[]\label{lem:one-sided-safety}
Fix any $0<\varepsilon_{\mathrm{safe}}<\varepsilon_{\mathrm{pol}}.$
If a parameter vector $\theta'=(\lambda'_i,\lambda'_o,C'_i,C'_o)$
satisfies
\begin{align*}
\lambda'_i\geq\lambda_i-\varepsilon_{\mathrm{safe}},\qquad \lambda'_o\leq\lambda_o+\varepsilon_{\mathrm{safe}},
\end{align*}
\begin{align*}
C'_i\leq C_i+\varepsilon_{\mathrm{safe}},\qquad C'_o\geq C_o-\varepsilon_{\mathrm{safe}},
\end{align*}
then the computed threshold under $\theta'$ is safe, i.e.,
$h(\theta')\geq\hstar.$
\end{lemma}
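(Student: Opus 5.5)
We prove Lemma~\ref{lem:one-sided-safety} by combining Assumption~\ref{ass:mono} with Assumption~\ref{ass:local-unique}. The idea is to move from $\theta'$ to a nearby point that lies inside the $\varepsilon_{\mathrm{pol}}$-ball around $\theta$, while changing every coordinate only in the direction that can lower the threshold.

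\textbf{Step 1: clipped parameter vector.} Define $\theta''=(\lambda''_i,\lambda''_o,C''_i,C''_o)$ coordinatewise by
\begin{align*}
\lambda''_i=\min\{\lambda'_i,\lambda_i+\varepsilon_{\mathrm{safe}}\},&\quad \lambda''_o=\max\{\lambda'_o,\lambda_o-\varepsilon_{\mathrm{safe}}\},\\
C''_i=\max\{C'_i,C_i-\varepsilon_{\mathrm{safe}}\},&\quad C''_o=\min\{C'_o,C_o+\varepsilon_{\mathrm{safe}}\}.
\end{align*}
The hypotheses of the lemma supply the bound on the other side of each coordinate. For example, $\lambda'_i\ge\lambda_i-\varepsilon_{\mathrm{safe}}$ gives $\lambda''_i\in[\lambda_i-\varepsilon_{\mathrm{safe}},\lambda_i+\varepsilon_{\mathrm{safe}}]$, and the other three coordinates work the same way. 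Hence
\[
\|\theta''-\theta\|_\infty\le\varepsilon_{\mathrm{safe}}<\varepsilon_{\mathrm{pol}},
\]
and Assumption~\ref{ass:local-unique} gives $h(\theta'')=\hstar$ as the unique optimal threshold.

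\textbf{Step 2: monotone comparison.} Going from $\theta''$ to $\theta'$ changes the coordinates only as follows:
\begin{itemize}
\item $\lambda_i$ can only increase, since $\lambda'_i\ge\lambda''_i$;
\item $\lambda_o$ can only decrease, since $\lambda'_o\le\lambda''_o$;
\item $C_i$ can only decrease, since $C'_i\le C''_i$;
\item $C_o$ can only increase, since $C'_o\ge C''_o$.
\end{itemize}
By Assumption~\ref{ass:mono}, each of these moves leaves the optimal threshold non-decreasing. We change one coordinate at a time, along the path
\[
\theta''\to(\lambda'_i,\lambda''_o,C''_i,C''_o)\to\cdots\to\theta',
\]
and chain the four monotonicity inequalities to obtain $h(\theta')\ge h(\theta'')=\hstar$.

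\textbf{Main obstacle.} The delicate step is applying Assumption~\ref{ass:mono} at the intermediate points and at $\theta'$ itself. These points may lie outside the $\varepsilon_{\mathrm{pol}}$-ball, and there the optimal threshold need not be unique. We read Assumption~\ref{ass:mono} as a global statement about the threshold map $h(\cdot)$ computed by the algorithm, with a fixed tie-breaking rule, for example taking the largest optimal threshold. With that reading the chaining is valid. We also need the intermediate vectors to be admissible parameters, meaning probabilities in $[0,1]$ and costs ordered as in Assumption 1. Any clipping that enforces admissibility moves coordinates only in the safe direction, so this can be handled the same way, or assumed as part of Assumption~\ref{ass:mono}. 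We will state this interpretation explicitly in the proof.
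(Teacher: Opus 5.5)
Your proof is correct and is essentially the paper's argument: choose an anchor within $\varepsilon_{\mathrm{safe}}$ of $\theta$, apply Assumption~\ref{ass:local-unique} there, and pass to $\theta'$ using Assumption~\ref{ass:mono}, which you read (as the paper implicitly does) as global monotonicity of the computed threshold map. The only difference is the anchor: the paper uses the fixed corner $\theta^-=(\lambda_i-\varepsilon_{\mathrm{safe}},\lambda_o+\varepsilon_{\mathrm{safe}},C_i+\varepsilon_{\mathrm{safe}},C_o-\varepsilon_{\mathrm{safe}})$, which the hypotheses already dominate in the safe direction so no clipping is needed, whereas your clipped $\theta''$ has the small extra benefit of staying admissible (e.g.\ $C''_o\ge 0$, $\lambda''_i\ge\lambda''_o$) whenever $\theta'$ is admissible.
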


\begin{proof}
Define the boundary parameter vector
\begin{align*}
\theta^-= (\lambda_i-\varepsilon_{\mathrm{safe}}, \lambda_o+\varepsilon_{\mathrm{safe}}, C_i+\varepsilon_{\mathrm{safe}}, C_o-\varepsilon_{\mathrm{safe}}).
\end{align*}
Since $\varepsilon_{\mathrm{safe}}<\varepsilon_{\mathrm{pol}}$,
$\|\theta^--\theta\|_\infty = \varepsilon_{\mathrm{safe}} < \varepsilon_{\mathrm{pol}}.$ By Assumption~\ref{ass:local-unique}, the unique optimal threshold under $\theta^-$ is $\hstar$. Hence $h(\theta^-)=\hstar.$ Now let $\theta'$ satisfy the one-sided inequalities in the statement of the lemma. Compared with $\theta^-$, the parameter vector $\theta'$ has larger $\lambda_i$, smaller $\lambda_o$, smaller $C_i$, and larger $C_o$. By Assumption~\ref{ass:mono}, these changes can only increase the optimal threshold. Therefore $h(\theta')\geq h(\theta^-)=\hstar.$ This proves the lemma.
\end{proof}

\begin{proof}[Proof of Theorem~\ref{thm:global-safety}]
We prove that, with probability at least $1-\delta$, the parameter estimates remain in the one-sided safe region of Lemma~\ref{lem:one-sided-safety} for all epochs. We first consider the estimate of $\lambda_i$. Let $Y_t$ denote the upward movement indicator on the $t$-th real sample collected under action $i$. We know, $\EE[Y_t\mid\Fcal_{t-1}]=\lambda_i.$ Define $ S_n=\sum_{t=1}^n(Y_t-\lambda_i).$
Then $(S_n)_{n\geq0}$ is a martingale. The prior-weighted estimate after $n$ real samples of action $i$ is
\begin{align*}
\hat\lambda_i(n) = \frac{n_0\lambda_{i,p}+\sum_{t=1}^nY_t}{n_0+n}.
\end{align*}
By Assumption~\ref{ass:safe-prior}, $\lambda_{i,p}\geq\lambda_i+\Delta.$
Suppose that for some $n$, $\hat\lambda_i(n)<\lambda_i-\varepsilon_{\mathrm{safe}}.$
Then
\begin{align*}
n_0\lambda_{i,p}+\sum_{t=1}^nY_t < (n_0+n)(\lambda_i-\varepsilon_{\mathrm{safe}}).
\end{align*}
Using $\lambda_{i,p}\geq\lambda_i+\Delta$, this implies
\begin{align*}
S_n = \sum_{t=1}^n(Y_t-\lambda_i) < -n_0(\Delta+\varepsilon_{\mathrm{safe}}) -n\varepsilon_{\mathrm{safe}}.
\end{align*}
Since $Y_t-\lambda_i$ is conditionally sub-Gaussian with proxy $\sigma^2$, for every $\eta>0$,
\begin{align*}
\EE[\exp(-\eta (Y_t-\lambda_i))\mid\Fcal_{t-1}] \leq \exp\left(\frac{\eta^2\sigma^2}{2}\right).
\end{align*}
Hence
\begin{align*}
Z_n = \exp\left(-\eta S_n-\frac{\eta^2\sigma^2n}{2}\right)
\end{align*}
is a nonnegative supermartingale with $\EE[Z_0]=1$. Choose $\eta=\frac{2\varepsilon_{\mathrm{safe}}}{\sigma^2}.$
On the event
\begin{align*}
S_n<-n_0(\Delta+\varepsilon_{\mathrm{safe}})-n\varepsilon_{\mathrm{safe}},
\end{align*}
we have
\begin{align*}
-\eta S_n-\frac{\eta^2\sigma^2n}{2} > \eta n_0(\Delta+\varepsilon_{\mathrm{safe}}) + \left(\eta\varepsilon_{\mathrm{safe}}-\frac{\eta^2\sigma^2}{2}\right)n.
\end{align*}
By the chosen value of $\eta$, $\eta\varepsilon_{\mathrm{safe}}-\frac{\eta^2\sigma^2}{2}=0.$
Therefore, on the event that the estimate crosses the unsafe boundary, we have
\begin{align*}
Z_n> \exp\left(\eta n_0(\Delta+\varepsilon_{\mathrm{safe}})\right).
\end{align*}
By Ville's inequality for nonnegative supermartingales~\cite{howard2020timeuniform},
\begin{align*}
\PP\left( \exists n\geq1: \hat\lambda_i(n)<\lambda_i-\varepsilon_{\mathrm{safe}} \right) \leq \exp\left( -\eta n_0(\Delta+\varepsilon_{\mathrm{safe}}) \right).
\end{align*}
The same argument applies to the other three estimated coordinates, with the unsafe direction changed according to Assumption~\ref{ass:safe-prior}. By the assumed lower bound on $n_0$, each of the four probabilities is at most $\delta/4$. Therefore, by a union bound, with probability at least $1-\delta$, none of the four bad events occurs. On this good event, none of the four unsafe crossings occurs at any number of real observations and hence for all epochs. Thus $\thetahat_k$ satisfies the one-sided safe-region inequalities in Lemma~\ref{lem:one-sided-safety}. Hence $\hhat_k\geq\hstar\text{ for all }k\geq1.$ Since $\hlow\leq\hstar\leq\hhigh$, the algorithm never changes a computed threshold that is at least $\hstar$ to a threshold below $\hstar$. Therefore, the threshold deployed by the algorithm at epoch $k$ is at least $\hstar$ for all $k\geq1$.
\end{proof}

\section{Experiment Details}
In this section, we present complete experimental details for the plots presented in the main paper. We set $\hlow=4$, $\hhigh=17$ across all experiments in this section.
\begin{enumerate}[label=\alph*)]
    \item \textbf{Figure 2a} - \textit{Percentage of converged runs after each epoch for different initial thresholds}
    \begin{itemize}
        \item Each epoch comprised of 75 timesteps. 
        \item True Parameters: $H=20, \lambda_o=0.2, \lambda_i=0.55, C_o=0, C_i=25, C_c=50000, \gamma=0.9$. The optimal threshold is 12.
        \item Priors: $\lambda_{o,p}=0.15,\lambda_{i,p}=0.6, C_{o,p}=0, C_{i,p}\in\{0.005,0.1,0.5,2.0\}$. The initial thresholds are $20,18,16,14$ for $C_{i,p}=0.005,0.1,0.5,2.0$, respectively.
        \item Prior Strength: $n_0=1000$
        \item Probability Sequence: $\rho_k=0.95k^{-1/3}$, where $k$ denotes the epoch number.
    \end{itemize}
    \item \textbf{Figure 2b} - \textit{Percentage of converged runs after each epoch for different prior strengths}
    \begin{itemize}
        \item Each epoch comprised of 75 timesteps. 
        \item True Parameters: $H=20, \lambda_o=0.2, \lambda_i=0.55, C_o=0, C_i=25, C_c=50000, \gamma=0.9$. The optimal threshold is 12.
        \item Priors: $\lambda_{o,p}=0.15,\lambda_{i,p}=0.6, C_{o,p}=0, C_{i,p}=0.5$. The initial threshold is 16.
        \item Prior Strength: $n_0\in\{10,100,200,500,1000\}$.
        \item Probability Sequence: $\rho_k=0.95k^{-1/3}$, where $k$ denotes the epoch number.
    \end{itemize}
    \item \textbf{Figure 3a} - \textit{Average time to reach critical health for different initial thresholds}
    \begin{itemize}
        \item Each epoch comprised of 75 timesteps. 
        \item True Parameters: $H=20, \lambda_o=0.2, \lambda_i=0.55, C_o=0, C_i=25, C_c=50000, \gamma=0.9$. The optimal threshold is 12.
        \item Priors: $\lambda_{o,p}=0.15,\lambda_{i,p}=0.6, C_{o,p}=0, C_{i,p}\in\{0.005,0.1,0.5,2.0\}$. The initial thresholds are $20,18,16,14$ for $C_{i,p}=0.005,0.1,0.5,2.0$, respectively.
        \item Prior Strength: $n_0=1000$
        \item Probability Sequence: $\rho_k=0.95k^{-1/3}$, where $k$ denotes the epoch number.
    \end{itemize}
    \item \textbf{Figure 3b} - \textit{Average time to reach critical health for different prior strengths}
    \begin{itemize}
        \item Each epoch comprised of 75 timesteps. 
        \item True Parameters: $H=20, \lambda_o=0.2, \lambda_i=0.55, C_o=0, C_i=25, C_c=50000, \gamma=0.9$. The optimal threshold is 12.
        \item Priors: $\lambda_{o,p}=0.15,\lambda_{i,p}=0.6, C_{o,p}=0, C_{i,p}=0.5$. The initial threshold is 16.
        \item Prior Strength: $n_0\in\{10,100,200,500,1000\}$.
        \item Probability Sequence: $\rho_k=0.95k^{-1/3}$, where $k$ denotes the epoch number.
    \end{itemize}
    \item \textbf{Table 1} - \textit{Safety of deployed policy for different priors}
    \begin{itemize}
        \item Each epoch comprised of 75 timesteps. 
        \item True Parameters: $H=20, \lambda_o=0.2, \lambda_i=0.55, C_o=0, C_i=25, C_c=50000, \gamma=0.9$. The optimal threshold is 12.
        \item Priors: $\lambda_{o,p}=0.15,\lambda_{i,p}=0.6, C_{o,p}=0, C_{i,p}=0.5$. The initial threshold is 16.
        \item Prior Strength: $n_0\in\{10,100,200,500,1000\}$.
        \item Probability Sequence: $\rho_k=0.95k^{-1/3}$, where $k$ denotes the epoch number.
    \end{itemize}
\end{enumerate}
\end{document}